\def\BibTeX{{\rm B\kern-.05em{\sc i\kern-.025em b}\kern-.08em
    T\kern-.1667em\lower.7ex\hbox{E}\kern-.125emX}}
\newtheorem{theorem}{Theorem}
\newtheorem{lemma}{Lemma}
\newtheorem{corollary}{Corollary}
\newtheorem{definition}{Definition}
\newtheorem{example}{Example}
\newtheorem{remark}{Remark}
\DeclareMathOperator*{\argmax}{argmax}
\begin{document}
\title{Intelligent Players in a Fictitious Play Framework\thanks{B. Vundurthy and V. Gupta are with the Department of Electrical Engineering at the University of Notre Dame {\tt{pvundurt,vgupta2}@nd.edu}. A. Kanellopoulos and K. Vamvoudakis are with the School of Aerospace Engineering at the Georgia Institute of Technology {\tt{ariskan,kyriakos}@gatech.edu}. }}
\author{Bhaskar Vundurthy, 
	\IEEEmembership{Member, IEEE}, 
	Aris Kanellopoulos,
	\IEEEmembership{Student Member, IEEE},
\\	Vijay Gupta, 
	\IEEEmembership{Senior Member, IEEE},
	Kyriakos Vamvoudakis, 
	\IEEEmembership{Senior Member, IEEE}
\thanks{This work was supported in part by ARO under grant No.s W$911$NF-$19-1-0270$ and W911NF-19-1-0483, by ONR Minerva under grant No. N$00014-18-1-2160$, by AFOSR under grant No. FA9550-21-1-0231, by DARPA under grant No.  FA8750-20-2-0502, and by NSF under grant No.s CAREER CPS-$1851588$ and ECCS-2020246.}}
\maketitle

\begin{abstract}
Fictitious play is a popular learning algorithm in which players that utilize the history of actions played by the players and the knowledge of their own payoff matrix can converge to the Nash equilibrium under certain conditions on the game. We consider the presence of an {\em intelligent} player that has access to the entire payoff matrix for the game. We show that by not conforming to fictitious play, such a player can achieve a better payoff than the one at the Nash Equilibrium. This result can be viewed both as a fragility of the fictitious play algorithm to a strategic intelligent player and an indication that players should not throw away additional information they may have, as suggested by classical fictitious play.  
\end{abstract}


\section{Introduction}
\label{sec:introduction}
Learning algorithms (see, e.g.,~\cite{fudenberg1998theory,marden2018game,young2004strategic}) can be viewed as a mechanism for the agents to discover their solution strategies under a solution concept such as a Nash equilibrium. Since natural and universal learning algorithms cannot converge to Nash equilibria~\cite{hart2003uncoupled}, convergence guarantees for particular learning algorithms are limited to specific game structures. In this paper, we specifically focus on fictitious play~\cite{brown1951} as the learning algorithm used by players. In fictitious play, each player builds a model of what the strategy of the other players is based on the historical actions taken by them and plays a best response to it. Analyzing the class of games for which fictitious play and its variants converge to a Nash equilibrium continues to be a direction of active research.

Two features of this algorithm are worth pointing out. First, almost all convergence results for the fictitious play algorithm assume that all players are following this algorithm. For a setting of games among strategic players, this seems a strong assumption requiring some form of {\em cooperation} among otherwise non-cooperative players. A quick thought, e.g., reveals that in a two-player game, a strategic player can force her opponent into a Stackelberg equilibrium with herself as the leader (and potentially gain in payoff) by deviating from the trajectory suggested by fictitious play. The first question of interest to us is to identify the optimal payoff that a strategic player can achieve by exploiting the fact that all the other players follow fictitious play. We show that a payoff higher than the one in Stackelberg equilibrium is indeed achievable.

The second feature of fictitious play is that the players do not use any further information about the game other than their own utility payoffs for various strategy combinations. This is desirable in that players that possess limited and distributed information about the game can still discover the solution. However, it does raise the question if a player with more knowledge can obtain a better payoff for herself (or all the players) than the one at Nash equilibrium. Once again, we show that a player that knows the entire payoff matrix for all the players can indeed improve its own payoff and in some settings, the payoffs of all players by using that information.

We also note that the convergence results on the classes of games for which fictitious play is known to converge is much larger when only two players are involved. When more than two players are present, it has been shown that the Nash equilibrium need not possess an {\em absorption} property (where a strategy profile leaves no incentive for a player to switch its action in future time instants) that is useful to guarantee convergence~\cite{marden2009joint}. Thus, the results for convergence of standard fictitious play algorithm in an $n$ player game are weaker. As a side contribution potentially of independent interest, 
we define a notion of non-degeneracy in $n$ player finite games and show that the presence of an ordinal potential function assures the convergence of fictitious play to the respective Nash equilibrium in such games.

\paragraph{Literature Review}
\label{sec:lit_review}
Ever since its introduction~\cite{brown1951}, fictitious play (FP) has been a popular learning algorithm in game theory~\cite{milgrom1991a,fudenberg2008book}. The class of games for which the algorithm converges to Nash equilibrium has been gradually expanded (see, e.g.,~\cite{robinson1951a,miyazawa1951a,monderer1996,berger2005a}), although it is known that the convergence does not hold in general~\cite{shapley1964some}. We focus on the variant known as alternating fictitious play that was actually the algorithm originally proposed by Brown~\cite{brown1951,berger2007}. This algorithm converges to the pure Nash equilibrium for non-degenerate ordinal potential games with two players. In this work, we consider games with more than two players. For games with $n$ players, in \cite{marden2005}, a lack of absorption property for standard FP was illustrated even for potential games for $n$ players. As part of our proofs, we show that this absorption property can be revived by imposing an additional constraint. 

However, almost all the existing convergence proofs in the literature assume that all players update their strategies according to FP. The payoffs that a strategic player may be able to derive by deviating from the algorithm (even as the other players continue to play FP) is largely unexplored. In our formulation, the strategic player is assumed to possess the knowledge of the payoff matrices of all the players, while the other players know (or use) only their own payoffs for various strategy combinations in keeping with FP. Players having access to dissimilar information about the game is, of course, widely studied (e.g. as games of incomplete information, in the form of incredible threats in dynamic games, or through models of bounded rationality~\cite{anibal2016a}). However, less work has considered it in the context of learning in games. One relevant field that has studied one rational patient player playing against a boundedly rational opponent that employs myopic best response is Market Dominance and the Chain-Store Game~\cite{milgrom1982a} where it is known that a predation based strategy delivers a higher payoff in the long run by encouraging a {\em reputation} for the rational player. The authors in \cite{kreps1982} limit the response of the opponent to a finite history of plays and show that the predation effects change dramatically since the players ignore future consequences towards the end of the time horizon. 

If the strategic player announces her commitment to a Stackelberg strategy where she is the leader, she can obtain the corresponding payoff both in reputation based setups and if the opponent is implementing fictitious play. Along this theme, \cite{fudenberg1987a} identifies the conditions under which the results from sequential game play extend to its simultaneous counterpart. These results are improved in \cite{fudenberg1989a} that further accounts for the possibility that distinct strategies on the long-run player could be observationally equivalent. \cite{fudenberg1992} concludes that if public commitments are allowed, then the best that a long-run strategic player could do is to publicly commit to a pure Stackelberg strategy while the opponents take the role of Stackelberg followers. \cite{freund1995} extends the discussion to contract games while \cite{conitzer2006} shows that the Stackelberg strategy that the strategic player announces to the opponents may be mixed.

In contrast to this stream of work,  we do not allow communication among the players, so that the strategic player can no longer commit to or announce her strategy publicly. In this case, we show that the pure Stackelberg solution turns out to be a special case of the convergence based mixed strategy for the intelligent player, that delivers an even higher payoff. Specifically, we present a sequence of strategies for the intelligent player that not only converges to her convergence based mixed strategy  but also restricts the opponents to a desired strategy profile that delivers the intended higher payoff. 

\paragraph{Contributions}
We consider the interaction between $n+1$ players that play a matrix stage game repeatedly. The players are classified based on their information level where the first class consists of a single intelligent player \textit{(IP)} who is aware of the complete game. All the remaining players, referred to as opponents, belong to the second class and are limited to the knowledge of their own payoffs for different strategy vectors. When all players employ FP, under suitable conditions, the players converge to the Nash equilibrium. However, the IP need not adhere to Fictitious Play. We ask the question: \textit{Can the IP obtain a higher than Nash equilibrium payoff by deviating from fictitious play?} Further, if there exists such a strategy profile, \textit{how does the IP enforce it when the opponents are implementing fictitious play? } Compared with prior works, our key contributions are as follows.
\begin{enumerate}
	\item We identify strategies that can deliver an expected payoff greater than the Nash and the Stackelberg equilibrium payoff for the \textit{IP}. For the case when there are 2 players in the game, the strategies that we identify are optimal for the {\em IP}. For the general case of $n+1$ players, we provide a more tractable class of strategies that we term as convergence based mixed strategies that may be sub-optimal, yet can provide an expected payoff greater than the Nash and the Stackelberg payoff for the \textit{IP}.
	\item We provide a Linear Programming formulation that determines the strategy identified above without having to explore actions of all opponents at every time instant. 
	\item We determine a pure action trajectory for the \textit{IP} that reaches the desired mixed strategy probabilities while keeping the opponents in their FP determined strategies. 
\end{enumerate}

The remainder of this article is structured as follows. Section~\ref{sec:problemdef} presents the basic problem setup and summarizes the alternating Fictitious Play 
algorithm. Section \ref{sec:twoplayers} presents \textit{IP}'s optimal strategy trajectory to obtain the highest possible payoff against a single opponent. Section \ref{sec:problem_n} provides a computationally tractable sub-optimal solution for a game with more than two players. Section~\ref{sec:conclusions} concludes the paper. 

\section{Problem Description}
\label{sec:problemdef}


Consider a finite game $\mathbf{G}=(n+1, {Y_i}, {U_i})$ with $n+1$ players, where each player $\mathcal{P}_i \in \{\mathcal{P}_0, \mathcal{P}_1, \cdots, \mathcal{P}_n\}$  has an action set $Y_i$ and a utility function $U_i: Y\rightarrow \mathbb{R}$ where $Y \coloneqq Y_0 \times Y_1 \times \cdots \times Y_n$. Further, for a given action profile $y = (y_0, y_1, \cdots, y_n) \in Y$, let $y_{-i}\coloneqq (y_0, \cdots, y_{i-1}, y_{i+1}, \cdots, y_n)$ denote a profile of player actions other than player $\mathcal{P}_i$, 
With a slight abuse of notation, a profile $y$ of actions can be written as $(y_i, y_{-i})$ and the corresponding utilities $U_i(y)$  as $U_i(y_i, y_{-i})$. We assume that the game $\mathbf{G}$ is played at times $t=1, 2, 3, \cdots$. 
A mixed strategy $\tilde{\mathbf{z}}_i$ is a vector of probabilities for all actions in the set $Y_i$. Denote  by $\tilde{\mathbf{z}}_{-i}$ the profile of mixed strategies for all players other than $\mathcal{P}_i$  
and the expected utility for $\mathcal{P}_i$ playing a pure action $y_i$ and the rest of the players playing  $\tilde{\mathbf{z}}_{-i}$ by $U_i(y_i,\tilde{\mathbf{z}}_{-i})$. When a game is played repeatedly, the mixed strategy vector of player $\mathcal{P}_i$ can change with time and  we denote the vector at time $t$  by $\tilde{\mathbf{z}}_i(t)$.

The players are categorized based on their information structure within the repeated game. In the first category, players are aware of only their own payoffs and the actions of all the players as realized in all stage games till that time.
We refer to them as the opponents and denote them using $\mathcal{P}_j$. Without loss of generality, we assume that $\mathcal{P}_j\in \mathcal{P}'\coloneqq \{\mathcal{P}_1,\mathcal{P}_2,\cdots,\mathcal{P}_n\}$. We assume that the opponents adhere to Alternating Fictitious Play (see Definition \ref{def:fict_play_n} below), that offers convergence guarantees in spite of the limited information availability at the players. Player $\mathcal{P}_0$ falls in the second category by dint of her knowledge of the entire game $\mathbf{G}$ which includes the payoff matrix for all the players, as well as her knowledge of her own payoff and actions of all the players as realized in all stage games till that time. We consider $\mathcal{P}_0$ to be the Intelligent Player and refer to it as the \textit{IP}. The \textit{IP} may deviate from FP to obtain a higher payoff. 
Every stage game is played as follows. At every stage $t$, the \textit{IP} begins with its action, say $y_0 \in Y_0$. The best response of all the remaining players then follows in the order of their indices, as specified  in Definition~\ref{def:fict_play_n}. Note that the assumption on the order is without loss of generality. Once all players have played at stage $t$, the payoff for all the players is realized after identifying the actions played by all the players at time $t$. The game then moves on to stage $t+1$.
\begin{definition}
	\label{def:fict_play_n}
	For a game $\mathbf{G} = (n+1, {Y_i}, {U_i})$, an opponent $\mathcal{P}_j \in \mathcal{P}'$ is considered to be adhering to \textbf{Alternating Fictitious Play (referred simply as FP in this paper)} if at every stage game at time $t$, $\mathcal{P}_j$ plays its best response (\textit{BR}$_i$($\cdot$), given by (\ref{eq:best_response_n})), to the empirical distribution of actions of players $\{\mathcal{P}_0, \mathcal{P}_1, \cdots, \mathcal{P}_{j-1}\}$ until time $t$ and that of players $\{\mathcal{P}_{j+1}, \cdots, \mathcal{P}_n\}$ until time $(t-1)$:
	\begin{equation}
	\label{eq:best_response_n}
		\begin{aligned}
			&{BR}_i(\hat{\tilde{\mathbf{z}}}_{-i}(t)) \coloneqq \argmax_{y_i\in Y_i}~ U_i(y_i,\hat{\tilde{\mathbf{z}}}_{-i}(t)),
			\end{aligned}
			\end{equation}
where $\hat{\tilde{\mathbf{z}}}_{-i}(t)= \{\hat{\tilde{\mathbf{z}}}_0(t), \cdots, \hat{\tilde{\mathbf{z}}}_{i-1}(t), \hat{\tilde{\mathbf{z}}}_{i+1}(t-1), \cdots, \hat{\tilde{\mathbf{z}}}_n(t-1)\}$ is the estimate of the mixed strategy profile of all the other players as calculated using the empirical frequency of the actions played by players $\{\mathcal{P}_0, \mathcal{P}_1, \cdots, \mathcal{P}_{j-1}\}$ until time $t$ and players $\{\mathcal{P}_{j+1}, \cdots, \mathcal{P}_n\}$ until time $(t-1)$.
	\hfill $\blacksquare$
\end{definition}

Note that how the {\em IP} should choose its strategy has not been specified. If she chooses FP as well, for many classes of games $\mathbf{G}$, the players will converge to the Nash equilibrium strategies. However, the {\em IP} can potentially obtain a better payoff by deviating from FP. 
The problem we are interested in  is to identify the optimal strategy for the {\em IP} to obtain the best payoff when the opponents continue to play FP. It is not clear a priori, whether the optimal strategy for the {\em IP} will be a mixed or a pure strategy, and whether the game will converge to an equilibrium or not when the {\em IP} deviates from FP. Further, we are interested in identifying an action sequence for the {\em IP} that realizes the desired strategy profile for all the players when the opponents play FP. 

\section{Two Player Games}
\label{sec:twoplayers}

The analysis of convergence of FP is much simpler and more advanced in 2 player games. We begin with that special case as well and show that an optimal strategy for the {\em IP} can be calculated using a linear program without the need to assume any additional structure on the game $\mathbf{G}.$

Let the action set $Y_{0}$ of the {\em IP} be the set $Y_{0}=\{y_{0}^{1},y_{0}^{2},\cdots,y_{0}^{n}\}$ and the set $Y_{1}$ for the opponent be the set $Y_{1}=\{y_{1}^{1},y_{1}^{2},\cdots,y_{1}^{m}\}.$ We can then denote the payoffs of the {\em IP} (resp. the opponent) through an $n\times m$ matrix $A$ (resp. $B$) such that the $(i,j)$-th element $a_{ij}$ of $A$ (resp. $b_{ij}$ of $B$) denotes the payoff of the {\em IP} (resp. the opponent) when the {\em IP} chooses action $y_{0}^{i}$ and the opponent chooses action $y_{1}^{j}$. 
Consequently, the best response for the {\em IP} corresponding to the opponent playing a mixed strategy $\tilde{\mathbf{z}}_{1}$ is given by $\argmax_i (A\tilde{\mathbf{z}}_{1})_i$ and the best response for the opponent corresponding to the {\em IP} playing a mixed strategy $\tilde{\mathbf{z}}_{0}$ is given by $\argmax_j (B\tilde{\mathbf{z}}_{0})_j$ 

FP by both players in a 2 player game is known to converge to a Nash equilibrium in some specific games, for instance in $2 \times M$ games with generic payoffs. We emphasize that we do not impose such restrictions on the payoffs of the players. However,  we assume that the payoffs of the opponent are indexed such that a lower index indicates a higher payoff for the \textit{IP}, and to break ties in FP, the opponent employs a lower index when indifferent between two or more pure actions. Our first result notes that the optimal strategy for the {\em IP} restricts the opponent to play a pure strategy in the steady state. 
\begin{theorem}
	The optimal strategy for the \textit{IP} is such that the opponent  plays a single action in the steady state. 
	\label{theorem:stativ_vs_dynamic}
\end{theorem}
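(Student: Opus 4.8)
The plan is to collapse the repeated interaction into a one-shot commitment problem and then close it with a convexity argument. First I would use the fact that the \emph{IP} fully controls its own action sequence: by playing pure actions with appropriate frequencies it can drive its empirical distribution $\hat{\tilde{\mathbf{z}}}_0(t)$ to any target mixed strategy $\tilde{\mathbf{z}}_0$ over $Y_0$. Since the opponent plays FP, in the induced steady state it can only place mass on its best responses to $\tilde{\mathbf{z}}_0$, namely on $J^\star(\tilde{\mathbf{z}}_0) := \argmax_{j} \sum_i \tilde{z}_{0i}\, b_{ij}$. Hence every attainable steady-state outcome is a profile $(\tilde{\mathbf{z}}_0,\tilde{\mathbf{z}}_1)$ with $\mathrm{supp}(\tilde{\mathbf{z}}_1) \subseteq J^\star(\tilde{\mathbf{z}}_0)$, and the \emph{IP}'s payoff is $\sum_j \tilde{z}_{1j}\, v_j$, where $v_j := \sum_i \tilde{z}_{0i}\, a_{ij}$ is the \emph{IP}'s expected payoff when the opponent plays the single action $y_1^j$.

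The core step is then immediate: for a fixed $\tilde{\mathbf{z}}_0$, the payoff $\sum_{j \in J^\star} \tilde{z}_{1j}\, v_j$ is a convex combination of the finitely many numbers $\{v_j : j \in J^\star\}$, so it is maximized by placing all of $\tilde{\mathbf{z}}_1$'s mass on any single index $j^\dagger \in \argmax_{j \in J^\star} v_j$. Consequently no genuinely mixed opponent response can strictly beat the best pure response available under the same $\tilde{\mathbf{z}}_0$, so a single-action opponent response is always weakly optimal for the \emph{IP}.

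Third, I would verify that this single-action outcome is actually reachable under FP, which is where the standing assumptions earn their keep. By the indexing convention the lowest index in $J^\star(\tilde{\mathbf{z}}_0)$ maximizes $v_j$ over $J^\star$, and the opponent's tie-breaking rule (lowest index when indifferent) selects exactly that action, so FP itself steers the opponent to the \emph{IP}'s preferred best response $j^\dagger$. When $J^\star$ is a singleton this is automatic; when $|J^\star|>1$ and I worry that the finite-time empirical distribution $\hat{\tilde{\mathbf{z}}}_0(t)$ oscillates near the indifference hyperplane, I would perturb $\tilde{\mathbf{z}}_0$ by an arbitrarily small amount to make $j^\dagger$ the \emph{unique} best response and invoke continuity of $v_j$ and of the payoff to conclude the optimal value is unchanged in the limit. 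Taking the maximum over all $\tilde{\mathbf{z}}_0$ then yields an optimal \emph{IP} strategy under which the opponent plays a single action in steady state.

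The main obstacle I anticipate is not the convexity step, which is routine, but rigorously linking the abstract steady-state profile to the FP trajectory: one must argue that $\hat{\tilde{\mathbf{z}}}_0(t)$ converges and that the opponent's empirical frequencies concentrate on $J^\star(\tilde{\mathbf{z}}_0)$, and one must handle the boundary case in which the \emph{IP}'s optimal $\tilde{\mathbf{z}}_0$ sits exactly on an opponent indifference hyperplane, where the instantaneous best response may oscillate. This boundary case is precisely where the tie-breaking assumption does the real work, and where the limiting/perturbation argument above must be made fully precise.
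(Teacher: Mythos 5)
Your proposal is correct and follows essentially the same route as the paper: the convexity step --- that any steady-state mixture by the opponent yields a convex combination of the per-action payoffs $v_j$ and is therefore weakly dominated by concentrating all mass on the single best action --- is exactly the paper's observation that switching among opponent pure strategies with unequal payoffs for the \emph{IP} can only lower her expected payoff. Your additional care about reachability, tie-breaking, and oscillation near indifference hyperplanes simply makes explicit what the paper's much terser proof leaves implicit.
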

\begin{proof}
	At every time instant, the opponent employing FP has a unique pure strategy best response for any trajectory history of the \textit{IP}. Thus, any switching between opponent's pure strategies that deliver unequal payoffs to the \textit{IP} would reduce the expected payoff. To maximize its expected payoff, the {\em IP} thus restricts the opponent to the pure strategy that delivers her highest payoff. 
\end{proof}

We can then characterize the optimal strategy of the \textit{IP}. 

\begin{theorem}
	Let $\tilde{\mathbf{z}}_{0}^{(j)}$ be a (possibly mixed) strategy for the \textit{IP} such that the corresponding best response for the opponent is the pure strategy $y_{1}^{j}$ where $j \in \{1,2,\cdots,m\}$. The strategy profile $(\tilde{\mathbf{z}}_{0}^*,{y}_{1}^{j^*})$  that maximizes the expected payoff of the \textit{IP} is given as follows:
	\begin{equation}
		\begin{gathered}
			j^* = \argmax_j  \left(\max_{\tilde{\mathbf{z}}_{0}^{(j)}} (\tilde{\mathbf{z}}_{0}^{(j)}\cdot Ay_{1}^{j})\right) ~\forall j \in \{1,2,\cdots,m\} \\
			\tilde{\mathbf{z}}_{0}^* = \max_{\tilde{\mathbf{z}}_{0}^{(j^*)}} (\tilde{\mathbf{z}}_{0}^{(j^*)}\cdot Ay_{1}^{j^*}).
		\end{gathered}
		\label{eq:two_pl_mixed}
	\end{equation}
	\label{theorem:two_pl_mixed}
\end{theorem}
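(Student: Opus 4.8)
The plan is to combine the structural reduction of Theorem~\ref{theorem:stativ_vs_dynamic} with a column-by-column linear optimization. By Theorem~\ref{theorem:stativ_vs_dynamic}, under the {\em IP}'s optimal strategy the opponent settles on a single pure action $y_1^j$ in the steady state, so any attainable steady state is described by a pair: the opponent's limiting pure action $y_1^j$ and the limiting empirical distribution $\tilde{\mathbf{z}}_0$ of the {\em IP}'s own play. Once the opponent is locked onto $y_1^j$, the {\em IP} receives the entry $a_{ij}$ whenever it plays row $i$, so its long-run expected payoff is the $j$-th column of $A$ averaged against $\tilde{\mathbf{z}}_0$, namely $\tilde{\mathbf{z}}_0\cdot Ay_1^j$ --- exactly the objective in~\eqref{eq:two_pl_mixed}.

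First I would make the feasible set explicit. Since the opponent plays FP, $y_1^j$ is its best response to $\tilde{\mathbf{z}}_0$ precisely when column $j$ maximizes the opponent's expected payoff $(B^\top\tilde{\mathbf{z}}_0)_k$ over all $k$. Hence the admissible limiting distributions that keep the opponent at the target $y_1^j$ form the polytope
\[
Z^{(j)} \coloneqq \Big\{\, \tilde{\mathbf{z}}_0 :\ \tilde{\mathbf{z}}_0\ge 0,\ \mathbf{1}^\top\tilde{\mathbf{z}}_0=1,\ (B^\top\tilde{\mathbf{z}}_0)_j \ge (B^\top\tilde{\mathbf{z}}_0)_k\ \ \forall k \,\Big\},
\]
i.e.\ the probability simplex over $Y_0$ intersected with the $m-1$ half-spaces that declare column $j$ (weakly) best for the opponent; each $\tilde{\mathbf{z}}_0^{(j)}$ in the statement is precisely a point of $Z^{(j)}$.

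For each fixed $j$, maximizing the linear objective $\tilde{\mathbf{z}}_0\cdot Ay_1^j$ over $Z^{(j)}$ is a linear program with optimal value $v_j\coloneqq \max_{\tilde{\mathbf{z}}_0\in Z^{(j)}}\tilde{\mathbf{z}}_0\cdot Ay_1^j$. Selecting the best column gives $j^*=\argmax_j v_j$ and a maximizer $\tilde{\mathbf{z}}_0^*$ of the associated program, which is exactly the pair $(\tilde{\mathbf{z}}_0^*,y_1^{j^*})$ in~\eqref{eq:two_pl_mixed}. Optimality follows in two directions. For the upper bound, Theorem~\ref{theorem:stativ_vs_dynamic} forces every attainable steady state to have the opponent pure at some $y_1^j$ with $\tilde{\mathbf{z}}_0\in Z^{(j)}$, so its payoff is at most $v_j\le\max_j v_j$. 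For achievability, $(\tilde{\mathbf{z}}_0^*,y_1^{j^*})$ is feasible and attains $\max_j v_j$.

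The step needing the most care is the boundary of $Z^{(j)}$ and its interaction with the empirical-distribution dynamics of FP. Because a lower index simultaneously yields the {\em IP} a higher payoff and is the opponent's choice when indifferent, making the opponent indifferent between $y_1^j$ and any higher-indexed column can only help the {\em IP}; this is what justifies using the weak inequalities in $Z^{(j)}$ and guarantees the maximum is attained rather than merely approached. I would still verify that a limiting empirical frequency lying in $Z^{(j)}$ keeps the opponent's best response equal to $y_1^j$ along the tail of play --- interior points are immediate, while boundary points rely on the stated tie-breaking rule --- and that $\tilde{\mathbf{z}}_0^*$ can actually be realized as an empirical frequency; the explicit {\em IP} action trajectory that accomplishes this is deferred to the constructive result that follows.
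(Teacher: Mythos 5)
Your proposal is correct and follows essentially the same route as the paper's proof: invoke Theorem~\ref{theorem:stativ_vs_dynamic} to restrict attention to steady states in which the opponent plays a single pure action, then for each candidate pure action maximize the \textit{IP}'s linear payoff over the set of mixed strategies that keep that action a best response, and finally take the best column. Your explicit description of the feasible polytope $Z^{(j)}$ and your attention to the tie-breaking rule at its boundary are welcome additions that the paper only gestures at (via Remark~\ref{rem:lp_two} and the indexing convention), but they do not change the argument.
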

\begin{proof}
	By Theorem~\ref{theorem:stativ_vs_dynamic}, the {\em IP} can restrict her search over strategies that lead to best responses for the opponent that are pure strategies in the steady state. In other words, the only strategies of interest are the non-dominated pure strategies  of the opponent. Now for every such strategy $y_{1}^{j}$, there exist (possibly multiple) mixed strategies $\tilde{\mathbf{z}}_{0}^{(j)}$ such that the opponent's best response to the mixed strategy via Fictitious play is $y_{1}^{j}$ and the corresponding payoff for the \textit{IP} is given by $\tilde{\mathbf{z}}_{0}^{(j)}\cdot A{y}_{1}^{j}$. The highest payoff for the {\em IP} can then be computed by maximizing this payoff over the mixed strategy space, followed by identifying the pure action of the opponent with the highest such payoff, as given by (\ref{eq:two_pl_mixed}). 	
\end{proof}

\begin{remark}
	Theorem \ref{theorem:two_pl_mixed} can be restated as $m$ Linear Programming problems with constraints arising from restricting the response of the opponent to one of the pure actions while the maximization comes from the expected payoff of the \textit{IP}. 
	\label{rem:lp_two}
		\hfill $\blacksquare$
\end{remark}

\begin{example}
\label{eg:play_2_strat_2X3}
\begin{table}[bhtp]
	\centering
	\begin{tabular}{ccccc}
		&                          & \multicolumn{3}{c}{$2$ (Opponent)}                                                              \\
		&                          & $L$                            & $B$                            & $R$                            \\ \cline{3-5} 
		\multirow{2}{*}{$1$ (\textit{IP})} & \multicolumn{1}{c|}{$U$} & \multicolumn{1}{c|}{({\bf 6},{\bf 10})} & \multicolumn{1}{c|}{(10,7)} & \multicolumn{1}{c|}{({\bf 8},2)} \\ \cline{3-5} 
		& \multicolumn{1}{c|}{$D$} & \multicolumn{1}{c|}{(5,1)} & \multicolumn{1}{c|}{({\bf 15},8)} & \multicolumn{1}{c|}{(7,{\bf 9})} \\ \cline{3-5} \\
	\end{tabular}
	\caption{A finite two player game considered in Example~\ref{eg:play_2_strat_2X3}}
	\label{tab:play_2_strat_2X3}
\end{table}

A finite two player game with two pure actions for the \textit{IP} (the row player) and three pure actions for the opponent (the column player) is presented via Table \ref{tab:play_2_strat_2X3}, where the best responses for each player are marked in bold. When both players employ FP, the game converges to the pure Nash equilibrium $(U,L)$, where the \textit{IP} obtains an expected payoff of $6$. To turn the game into a Stackelberg game, the \textit{IP} can play $D$ repeatedly and shift the game to $(D,R)$, thus obtaining a payoff of $7$. Theorem \ref{theorem:two_pl_mixed} increases this payoff further by posing three Linear programming problems corresponding to the three pure actions of the opponent. It turns out that a mixed strategy of $(\frac{1}{6}, \frac{5}{6})^T$ restricts the opponent to the pure action $B$ and delivers an expected payoff of $14.17$, greater than prior solutions and is also the highest possible payoff. The conditions from Linear Programming problem require the \textit{IP} to maintain the probabilities of its pure action $U$ in the range $[\frac{1}{6}, \frac{7}{10}]$, in order to restrict the opponent to $B$. A strategy trajectory that achieves this is $(U, D, D, D, D, D)$. 

\end{example}

\section{Games with more than two players}
\label{sec:problem_n}


While the above analysis can be generalized to games with more than two players, the solution quickly becomes computationally complicated. Further, analysis of convergence of FP in games with more than two players is more limited than in those with two players even if no {\em IP} is present.  We now make some assumptions on the game structure and present a suboptimal but computationally more tractable solution.

\paragraph{Assumptions on the Game} Define a subgame $\mathbf{G}^{(y_0)}$ that restricts the \textit{IP} to one of its pure actions $y_0$ as follows (note that the subscript $i$ is reserved for all players in game $\mathbf{G}$ while $j$ is reserved for opponents in subgame $\mathbf{G}^{(y_0)}$ i.e., $\mathcal{P}_i \in \mathcal{P}$ and $\mathcal{P}_j \in \mathcal{P}'$):
		$\mathbf{G}^{(y_0)} = (n, Y_j, U^{(y_0)}_j)	~\text{where}~ U^{(y_0)}_j = U_j(y_0,y_j,y_{-j})
		\text{ such that}~ j \in \{1,\cdots, n\}, y_j \in Y_j, y_{-j} \in \times_{s\neq \{0,j\}} Y_s.$
%
We assume that the game $\mathbf{G}$ is a non-degenerate ordinal  potential game {\em with respect to the IP} as defined below.

\begin{definition}	
	\label{def:degeneracy_n}
	A game $\mathbf{G} = (n+1, {Y_i}, {U_i})$ is considered to be degenerate with respect to the \textit{IP}, if for some $y_0 \in Y_0$, there exists $y_{-0}, y'_{-0} \in\times Y_j$ such that $U_j(y_0,y_{-0}) = U_j(y_0,y'_{-0})$, if $y_{-0} \neq y'_{-0}$. Otherwise, the game is said to be \textbf{non-degenerate with respect to the \textit{IP}}. 
%
	Further, it is an \textbf{ordinal potential game with respect to the IP} if every subgame $\mathbf{G}^{(y_0)} = (n, Y_j, U^{(y_0)}_j)$, $y_0 \in Y_0$, is an ordinal potential game with a unique pure Nash equilibrium. 
	\hfill $\blacksquare$
\end{definition}
Note that we restrict our discussion to non-degenerate and ordinal potential games where the constraints are applicable only to the subgames  $\mathbf{G}^{(y_0)}$   and not to  $\mathbf{G}$ itself. Consequently, the discussion below relates to a larger class of games. 
We first show a convergence result for FP in this larger class of games.
\begin{theorem}
	In every subgame $\mathbf{G}^{(y_0)}$ of a finite non-degenerate ordinal potential game $\mathbf{G}$ with respect to the \textit{IP}, FP 
	by the opponents converges to a pure Nash equilibrium of the game in a finite number of time steps. 
	\label{theorem:AFP_Nash}
\end{theorem}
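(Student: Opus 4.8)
The plan is to prove finite-time convergence by combining two ingredients: an \emph{absorption} property for the unique pure Nash equilibrium under FP, and an ordinal-potential argument that precludes perpetual switching. Throughout, fix the \textit{IP} action $y_0$, write $\Phi$ for an ordinal potential of the subgame $\mathbf{G}^{(y_0)}$, and let $y^*$ be its unique pure Nash equilibrium. The goal is to show that the realized action profile produced by alternating FP among the $n$ opponents becomes constant and equal to $y^*$ after finitely many stages.

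First I would record the consequences of the hypotheses that make the argument possible. Non-degeneracy with respect to the \textit{IP} guarantees that, inside $\mathbf{G}^{(y_0)}$, distinct action profiles of the remaining players give distinct payoffs, so each opponent's best response is unique and, in particular, $y^*$ is a \emph{strict} best response for every opponent, i.e. a strict pure Nash equilibrium. This strictness is exactly what \cite{marden2005} shows can fail for generic $n$-player potential games, and it is the additional structure needed to restore absorption. I would then establish absorption in the form suited to FP: let $R$ denote the set of empirical beliefs for which $y^*$ is the strict best response of every opponent. Because these best-response conditions are linear inequalities in the belief, $R$ is convex and contains the degenerate belief supported on $y^*$. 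If the running empirical belief ever enters $R$, then $y^*$ is played, and the updated belief --- a convex combination of the current belief and the degenerate belief on $y^*$, both in $R$ --- remains in $R$; hence play is locked at $y^*$ forever. It therefore suffices to show that the empirical belief enters $R$ in finite time.

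The final and most delicate step is to show that play actually reaches $y^*$. Here I would exploit the ordinal potential: along any path of \emph{strict} unilateral improvements the value of $\Phi$ strictly increases, and since $\Phi$ takes finitely many values on the finite profile space $\prod_j Y_j$, every such improvement path is finite and terminates at a pure Nash equilibrium, which by uniqueness must be $y^*$. The alternating (sequential) structure of the opponents' updates is helpful here, since sequential best responses against a fixed profile move the potential monotonically. I would then argue that, once the empirical frequencies have stabilized enough, the FP best responses coincide with the instantaneous best responses, so that the realized trajectory eventually behaves like such a finite improvement path and is driven into $R$.

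The hard part will be bridging the gap between FP, which best-responds to the time-averaged empirical belief, and the pure best-response dynamics governed by $\Phi$: a change of action under FP is not literally a strict improvement against the current profile, so the potential need not be monotone along the raw FP trajectory, and I must rule out a limit cycle of beliefs. My intended route is to use finiteness of both $\prod_j Y_j$ and the range of $\Phi$ to show that a recurrent set of realized profiles visited infinitely often is incompatible with the strictly monotone behavior of $\Phi$ along the associated improvement steps, forcing play to become eventually constant; the only constant play consistent with FP is the strict equilibrium $y^*$. The uniqueness of the Nash equilibrium is essential at precisely this point, because with several strict pure equilibria the empirical belief could drift indefinitely between their basins without the realized profile ever settling. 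Ruling out this drift, and thereby forcing the belief into the absorbing region $R$, is where the main effort of the proof lies.
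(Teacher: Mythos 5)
Your outline is aligned with, and in places sharper than, the paper's own argument. The paper's proof consists precisely of the two ingredients you identify: in a finite subgame an ordinal potential excludes improvement cycles, so every improvement path terminates, and non-degeneracy guarantees the terminal profile admits no improvement step out of it, hence is the (unique, strict) pure Nash equilibrium reached in finitely many steps. Your additional observations --- that non-degeneracy makes every opponent's best response unique so that $y^*$ is a strict equilibrium, and that the set $R$ of beliefs for which $y^*$ is the strict best response of every opponent is convex and absorbing under the empirical-frequency update --- make explicit the "absorption" property that the paper only alludes to in its introduction.

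The difficulty is that your proposal stops exactly where the proof has to be carried out, and you say so yourself. A switch under FP is a best response to the \emph{time-averaged} empirical belief, not a strict unilateral improvement against the currently realized profile, so $\Phi$ need not be monotone along the raw trajectory; your claim that "a recurrent set of realized profiles visited infinitely often is incompatible with the strictly monotone behavior of $\Phi$" is asserted, not proved, and it does not follow from finiteness alone --- profiles in a recurrent set are revisited precisely because the slowly-moving belief, not an instantaneous improvement, drives the switches. What is missing is the bridging lemma: for all sufficiently large $t$, non-degeneracy (which bounds payoff gaps away from zero uniformly) forces any action change prescribed by FP to be a strict improvement against the profile realized at the previous stage, so that the tail of the realized trajectory \emph{is} a finite improvement path and the potential argument applies, driving the belief into $R$. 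Without that lemma the proof is a plan rather than a proof. (In fairness, the paper's own three-sentence proof silently identifies the FP trajectory with an improvement path and elides the same step; but having explicitly flagged the gap, you are obliged to close it.)
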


\begin{proof}
	In a subgame with finite strategy profiles, there cannot be an infinite sequence of improvement steps without resorting to cycles. Since such cycles are absent in ordinal potential subgames, any improvement path converges to a strategy profile. In a non-degenerate subgame, such a strategy profile 
	does not have any improvement steps leading out of it, making it the pure Nash equilibrium. As a result, all improvement paths converge to a pure Nash equilibrium in finite time. 
\end{proof}

\paragraph{Determination of the best convergence based mixed strategy for the \textit{IP}}  In order to determine a strategy that increases {\em IP}'s payoff, one possibility is to consider the payoffs in the subgames $\mathbf{G}^{(y_{0})}$ corresponding to all its pure actions and select the one that yields the best payoff. However, the {\em IP} can in fact do better by switching between this action and others in a manner that increases her payoff, without allowing the opponents to switch from the pure Nash equilibrium of the subgame. We may term such a strategy for the {\em IP} as a \textbf{convergence based mixed strategy}, which is a mixed strategy specific to a pure action $y_0$, such that the \textit{IP} switches between its pure actions while restricting the opponents to the pure Nash equilibrium of the subgame $\mathbf{G}^{(y_0)}$. By assumption, the opponents switch from one action to another only if the expected payoff for the former is strictly lower than the latter. 

Let $\mathbf{z}_0$ denote \textit{IP}'s convergence based mixed strategy such that the best response for all the opponents is $(y_0, y_j^*, y_{-j}^*)$ in the subgame $\mathbf{G}^{(y_0)}$. The expected payoff for any player $\mathcal{P}_i \in \mathcal{P}$ is given by (\ref{eq:payoff_IP_n}), where $\mathbf{z}_0^k$ is the probability corresponding to a pure action $k \in Y_0$. 
\begin{equation}
	U_i^{(y_0)}(\mathbf{z}_0,y_j^*, y_{-j}^*) = \sum_{k \in Y_0} \mathbf{z}_0^k ~U_i(k,y_j^*, y_{-j}^*).
	\label{eq:payoff_IP_n}	
\end{equation}

The following result 
determines the subgame $\mathbf{G}^{(y_0^*)}$ and its corresponding convergence based mixed strategy $\mathbf{z}_0^*$. 

\begin{theorem}
	Let $(y_0, y_j^*, y_{-j}^*)$ be the strategy profile corresponding to the pure Nash equilibrium for the subgame $\mathbf{G}^{(y_0)}$, where $y_0 \in Y_0$ and $j\in \{1,2,\cdots,n\}$. Let $U_0^{(y_0)}(\mathbf{z}_0,y_j^*,y_{-j}^*)$ be the expected payoff for \textit{IP} for its mixed strategy $\mathbf{z}_0$ in the subgame $\mathbf{G}^{(y_0)}$, as given by (\ref{eq:payoff_IP_n}). The strategy profile $(\mathbf{z}_0^*,y_j^*, y_{-j}^*)$  that maximizes the expected payoff of the \textit{IP} is:  
	\begin{equation}
		\begin{gathered}
			y_0^* = \argmax_{y_0}  (\max_{\mathbf{z}_0} ~U_0^{(y_0)}(\mathbf{z}_0,y_j^*,y_{-j}^*))\\ 
			\mathbf{z}_0^* = \argmax_{\mathbf{z}_0} ~U_0^{(y_0^*)}(\mathbf{z}_0,y_j^*,y_{-j}^*).
		\end{gathered}
		\label{eq:n_pl_mixed}
	\end{equation}
	\label{theorem:n_pl_mixed}
\end{theorem}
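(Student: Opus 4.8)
The plan is to mirror the two-player argument of Theorem~\ref{theorem:two_pl_mixed}, now carried out one subgame at a time. First I would invoke Theorem~\ref{theorem:AFP_Nash}: since $\mathbf{G}$ is a non-degenerate ordinal potential game with respect to the \textit{IP}, every subgame $\mathbf{G}^{(y_0)}$ possesses a unique pure Nash equilibrium $(y_j^*, y_{-j}^*)$ to which FP by the opponents converges in finitely many steps. This makes the anchor action $y_0$ a legitimate index: fixing $y_0$ pins down both the subgame and the equilibrium profile the opponents are driven to, so that a convergence based mixed strategy is fully described by the pair $(y_0, \mathbf{z}_0)$, where $\mathbf{z}_0$ is the empirical distribution the \textit{IP} realizes over its own pure actions while holding the opponents at $(y_j^*, y_{-j}^*)$.

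Next I would record that, with the opponents frozen at $(y_j^*, y_{-j}^*)$, the \textit{IP}'s expected payoff is exactly (\ref{eq:payoff_IP_n}), which is linear in the mixing vector $\mathbf{z}_0$. The set of admissible $\mathbf{z}_0$, namely those whose realization leaves the opponents' FP best response unchanged at $(y_j^*, y_{-j}^*)$, is carved out by the opponents' best response inequalities and is therefore a closed, bounded polytope; it is nonempty because the degenerate choice placing all mass on $y_0$ already keeps the opponents at the equilibrium of $\mathbf{G}^{(y_0)}$. Hence the inner problem $\max_{\mathbf{z}_0} U_0^{(y_0)}(\mathbf{z}_0, y_j^*, y_{-j}^*)$ is a linear program whose optimum is attained; call its value $V(y_0)$ and a maximizer $\mathbf{z}_0^{(y_0)}$.

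Finally I would argue the decomposition. Every convergence based mixed strategy belongs to exactly one family indexed by its anchor $y_0$, and within that family no strategy beats $V(y_0)$; therefore the best convergence based mixed strategy over the whole class has value $\max_{y_0} V(y_0)$. Setting $y_0^* = \argmax_{y_0} V(y_0)$ and $\mathbf{z}_0^* = \mathbf{z}_0^{(y_0^*)}$ then yields precisely (\ref{eq:n_pl_mixed}). The optimality of locking onto a single equilibrium, rather than straddling several subgames, is the $n$-player echo of Theorem~\ref{theorem:stativ_vs_dynamic}: any excursion that would force a genuine opponent deviation trades the frozen high payoff for a strictly lower one, so it cannot improve the expected return and may be excluded a priori.

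I expect the main obstacle to be the second step rather than the outer optimization: verifying that the feasible region of $\mathbf{z}_0$ is exactly the polytope of distributions under which $(y_j^*, y_{-j}^*)$ remains the opponents' FP best response, and that such a $\mathbf{z}_0$ is actually realizable by a pure action trajectory without the opponents ever wandering off the equilibrium. This is where non-degeneracy and the absorption property do the real work, guaranteeing that once the opponents sit at the unique pure equilibrium they stay there under any admissible mixing; it is also the same feasibility description that the subsequent Linear Programming formulation and trajectory construction exploit, so I would state it here and defer its algorithmic realization to those sections.
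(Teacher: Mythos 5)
Your proposal is correct and follows essentially the same route as the paper's proof: invoke Theorem~\ref{theorem:AFP_Nash} to restrict attention to the pure Nash equilibria of the subgames $\mathbf{G}^{(y_0)}$, note the existence of admissible mixed strategies $\mathbf{z}_0$ that keep the opponents at that equilibrium, and then perform the inner maximization over $\mathbf{z}_0$ followed by the outer maximization over $y_0$. Your added observations (nonemptiness and compactness of the feasible polytope, and the explicit deferral of pure-trajectory realizability to the later LP and sequence-construction results) only make explicit what the paper leaves implicit and handles in Corollary~\ref{cor:n_pl_mixed} and Theorem~\ref{theorem:sequence}.
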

\begin{proof}
From Theorem \ref{theorem:AFP_Nash}, it is adequate to consider the Nash equilibria of the opponents in lieu of the remaining strategy profiles since FP 
	always converges to the pure Nash equilibria in the class of games we consider. For every subgame $\mathbf{G}^{(y_0)}$ and the corresponding pure Nash equilibrium $(y_0, y_j^*,y_{-j}^*)$, there exists a mixed strategy profile $\mathbf{z}_0$ for the \textit{IP} whose best response for all the opponents is still the Nash equilibrium profile $(y_0, y_j^*,y_{-j}^*)$. 
	Thus, the maximum expected payoff for the \textit{IP} can be computed by first maximizing such payoff $U_0^{(y_0)}(\mathbf{z}_0,y_j^*,y_{-j}^*)$, within a subgame $\mathbf{G}^{(y_0)}$ followed by identifying the subgame $\mathbf{G}^{(y_0^*)}$ with the highest maximum expected payoff, as given by (\ref{eq:n_pl_mixed}). The corresponding convergence based mixed strategy $\mathbf{z}_0^*$ can be obtained by using the subgame $\mathbf{G}^{(y_0^*)}$ and maximizing the expectation from (\ref{eq:payoff_IP_n}). 
\end{proof}

We now present a result that indicates a procedure to compute $\mathbf{z}_0^*$ via a linear program (LP). 

\begin{corollary}
	The computation of each $\mathbf{z}_0$ via Theorem \ref{theorem:n_pl_mixed} for a pure action $y_0$ of the \textit{IP} can be solved using an LP. $\mathbf{z}_0^*$ can be calculated by solving a number of such problems equal to the cardinality of the set $Y_0$ and then by choosing the subgame with the highest maximum expected payoff. 
	\label{cor:n_pl_mixed}
\end{corollary}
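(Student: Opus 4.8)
The plan is to observe that once the pure action $y_0$ is fixed, both the quantity being maximized and the conditions that keep the opponents pinned to the subgame Nash equilibrium are \emph{linear} in the decision variable $\mathbf{z}_0$, so the inner maximization in (\ref{eq:n_pl_mixed}) is a linear program. First I would fix $y_0 \in Y_0$ and let $(y_0, y_j^*, y_{-j}^*)$ denote the unique pure Nash equilibrium of the subgame $\mathbf{G}^{(y_0)}$, which exists by Definition \ref{def:degeneracy_n} and to which FP converges by Theorem \ref{theorem:AFP_Nash}. The decision variable is the mixed strategy $\mathbf{z}_0 = (\mathbf{z}_0^k)_{k \in Y_0}$ of the \textit{IP} over its own pure actions.

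The objective is immediate: by (\ref{eq:payoff_IP_n}), $U_0^{(y_0)}(\mathbf{z}_0, y_j^*, y_{-j}^*) = \sum_{k \in Y_0} \mathbf{z}_0^k U_0(k, y_j^*, y_{-j}^*)$ is a linear function of $\mathbf{z}_0$, since the coefficients $U_0(k, y_j^*, y_{-j}^*)$ are constants once the equilibrium profile is fixed. The probability-simplex requirements $\mathbf{z}_0^k \ge 0$ and $\sum_{k \in Y_0} \mathbf{z}_0^k = 1$ are likewise linear.

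The crux is to show that the constraint ``$\mathbf{z}_0$ is a convergence based mixed strategy for $(y_0, y_j^*, y_{-j}^*)$'' is a system of linear inequalities. An opponent $\mathcal{P}_j$ playing FP best-responds to the empirical distribution of the others; when the \textit{IP}'s empirical play realizes $\mathbf{z}_0$ and the remaining opponents sit at $y_{-j}^*$, the action $y_j^*$ remains $\mathcal{P}_j$'s best response exactly when $U_j(\mathbf{z}_0, y_j^*, y_{-j}^*) \ge U_j(\mathbf{z}_0, y_j, y_{-j}^*)$ for every $y_j \in Y_j$. Expanding the expectation over $\mathbf{z}_0$ gives, for each opponent $j$ and each alternative action $y_j \ne y_j^*$, the inequality
\begin{equation}
	\sum_{k \in Y_0} \mathbf{z}_0^k \left( U_j(k, y_j^*, y_{-j}^*) - U_j(k, y_j, y_{-j}^*) \right) \ge 0,
\end{equation}
which is linear in $\mathbf{z}_0$ (the non-strict inequality suffices because, by assumption, an opponent switches actions only when the alternative is strictly better, so ties leave the opponent at $y_j^*$). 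Collecting the objective, the simplex constraints, and these best-response constraints yields a linear program whose optimum equals $\max_{\mathbf{z}_0} U_0^{(y_0)}(\mathbf{z}_0, y_j^*, y_{-j}^*)$. Solving this LP once for each $y_0 \in Y_0$ and retaining the subgame with the largest optimal value realizes the outer $\argmax$ of (\ref{eq:n_pl_mixed}), giving $\mathbf{z}_0^*$ after $|Y_0|$ linear programs.

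I expect the main obstacle to be the rigorous justification that enforcing each opponent's single-action best-response inequalities at $(y_j^*, y_{-j}^*)$ is \emph{sufficient} to keep the entire opponent population absorbed at that profile under the sequential FP dynamics, rather than drifting to another equilibrium through a chain of deviations. This is precisely where Theorem \ref{theorem:AFP_Nash} together with the ordinal-potential and non-degeneracy assumptions of Definition \ref{def:degeneracy_n} must be invoked to guarantee that the profile is absorbing, so that the per-opponent linear constraints faithfully encode the global convergence requirement and the feasible region of the LP coincides with the set of admissible convergence based mixed strategies.
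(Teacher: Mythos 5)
Your proposal is correct and follows essentially the same route as the paper: you fix $y_0$, observe that the objective $\sum_{k}\mathbf{z}_0^k U_0(k,y_j^*,y_{-j}^*)$ and the simplex constraints are linear, and encode the absorption of the opponents at $(y_0,y_j^*,y_{-j}^*)$ via the per-opponent inequalities $\sum_{k}\mathbf{z}_0^k\bigl(U_j(k,y_j^*,y_{-j}^*)-U_j(k,y_j,y_{-j}^*)\bigr)\ge 0$, which are exactly the constraints of the paper's LP in (\ref{eq:LP}), then solve $|Y_0|$ such programs and take the best. Your closing remark about needing the tie-breaking convention and Theorem~\ref{theorem:AFP_Nash} to ensure the profile is genuinely absorbing is a fair observation, but the paper handles it the same way, by assumption rather than by additional argument.
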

\begin{proof}
	The proof follows from the structure of (\ref{eq:n_pl_mixed}) where $\mathbf{z}_0$ is obtained by maximizing the cost function $U_0^{(y_0)}(\mathbf{z}_0,y_j^*,y_{-j}^*)$. Specifically, the Linear Programming problem for a pure action $k \in Y_0$ can be stated via (\ref{eq:LP}), where, for brevity, $\mathbf{z}_0$ is denoted by a vector $\mathbf{q} = (q_1, q_2, \cdots, q_{|Y_0|})^T$ where $|Y_0|$ is cardinality of the set $Y_0$ and the strategy profile $(k,y_j^*, y_{-j}^*)$ is the Nash equilibrium for opponents in the subgame $\mathbf{G}^{(k)}$. 
\begin{equation}
	\begin{gathered}
		\text{Maximize} \sum_{k \in Y_0} q_k ~U_0(k,y_j^*, y_{-j}^*) \\
		\text{s.t.} \sum_{k \in Y_0} q_k ~[U_j(k,y'_j,y_{-j}^*) - U_j(k,y_j^*,y_{-j}^*)] \leq 0 \\
		\forall ~ y'_j \in Y_j, y'_j \neq y_j^* ~\text{and}~ \forall ~j \in \{1,2,\cdots, n\} \\
		\sum_{k \in Y_0} q_k = 1, ~~q_k \in [0,1].
	\end{gathered}
	\label{eq:LP}
\end{equation}
Various constraints for the Linear Programming problem arise from restricting the opponents' best responses to the pure Nash equilibrium of the subgame $\mathbf{G}^{(y_0)}$. The additional maximization in (\ref{eq:n_pl_mixed}) then leads to the identification of the subgame $\mathbf{G}^{(y_0^*)}$ that delivers the maximum expected payoff. 
\end{proof}

\begin{example}
	A finite non-degenerate ordinal potential three player game with respect to the \textit{IP} ($\mathcal{P}_0$) is presented via Table \ref{tab:play_3_strat_3}. Each player has three strategies and each matrix represents the game with respect to one of the strategies of the \textit{IP}. The row player is $\mathcal{P}_1$ while the column player is $\mathcal{P}_2$. 
	\begin{table}[bhtp]
		\centering
		\begin{tabular}{ccccc}
			&                          & \multicolumn{3}{c}{$\mathcal{P}_2$}                                                              \\
			&                          & $L$                            & $N$                            & $R$                            \\ \cline{3-5} 
			\multirow{3}{*}{$\mathcal{P}_1$} & \multicolumn{1}{c|}{$U$} & \multicolumn{1}{c|}{(2,1,1) {\color{ForestGreen} \bf 1}} & \multicolumn{1}{c|}{(3,{\bf 6},{\bf 3}) {\color{ForestGreen} \bf 9}} & \multicolumn{1}{c|}{(6,8,2) {\color{ForestGreen} \bf 5}} \\ \cline{3-5} 
			& \multicolumn{1}{c|}{$M$} & \multicolumn{1}{c|}{(3,2,7) {\color{ForestGreen} \bf 2}} & \multicolumn{1}{c|}{(4,4,8) {\color{ForestGreen} \bf 3}} & \multicolumn{1}{c|}{(3,7,{\bf 9}) {\color{ForestGreen} \bf 4}} \\ \cline{3-5} 
			& \multicolumn{1}{c|}{$D$} & \multicolumn{1}{c|}{(3,{\bf 3},5) {\color{ForestGreen} \bf 7}} & \multicolumn{1}{c|}{(2,5,{\bf 6}) {\color{ForestGreen} \bf 8}} & \multicolumn{1}{c|}{(4,{\bf 9},4) {\color{ForestGreen} \bf 6}} \\ \cline{3-5} 
			&                          & \multicolumn{3}{c}{$\mathcal{P}_0 \Leftrightarrow \text{\textit{IP}}~(y_0 = A)$}                 \\
			&                          & $L$                            & $N$                            & $R$                            \\ \cline{3-5} 
			\multirow{3}{*}{$\mathcal{P}_1$} & \multicolumn{1}{c|}{$U$} & \multicolumn{1}{c|}{(2,4,2) {\color{ForestGreen} \bf 3}} & \multicolumn{1}{c|}{(4,3,3) {\color{ForestGreen} \bf 4}} & \multicolumn{1}{c|}{(7,6,{\bf 9}) {\color{ForestGreen} \bf 5}} \\ \cline{3-5} 
			& \multicolumn{1}{c|}{$M$} & \multicolumn{1}{c|}{(4,2,4) {\color{ForestGreen} \bf 2}} & \multicolumn{1}{c|}{(3,1,1) {\color{ForestGreen} \bf 1}} & \multicolumn{1}{c|}{(3,8,{\bf 5}) {\color{ForestGreen} \bf 6}} \\ \cline{3-5} 
			& \multicolumn{1}{c|}{$D$} & \multicolumn{1}{c|}{(3,{\bf 7},{\bf 8}) {\color{ForestGreen} \bf 9}} & \multicolumn{1}{c|}{(2,{\bf 5},7) {\color{ForestGreen} \bf 8}} & \multicolumn{1}{c|}{(4,9,6) {\color{ForestGreen} \bf 7}} \\ \cline{3-5} 
			&                          & \multicolumn{3}{c}{$\mathcal{P}_0 \Leftrightarrow \text{\textit{IP}}~(y_0 = B)$}                 \\
			&                          & $L$                            & $N$                            & $R$                            \\ \cline{3-5} 
			\multirow{3}{*}{$\mathcal{P}_1$} & \multicolumn{1}{c|}{$U$} & \multicolumn{1}{c|}{(2,1,1) {\color{ForestGreen} \bf 1}} & \multicolumn{1}{c|}{(5,{\bf 9},2) {\color{ForestGreen} \bf 8}} & \multicolumn{1}{c|}{(5,{\bf 6},{\bf 3}) {\color{ForestGreen} \bf 9}} \\ \cline{3-5} 
			& \multicolumn{1}{c|}{$M$} & \multicolumn{1}{c|}{(4,2,7) {\color{ForestGreen} \bf 2}} & \multicolumn{1}{c|}{(3,8,{\bf 9}) {\color{ForestGreen} \bf 7}} & \multicolumn{1}{c|}{(4,5,8) {\color{ForestGreen} \bf 6}} \\ \cline{3-5} 
			& \multicolumn{1}{c|}{$D$} & \multicolumn{1}{c|}{(3,{\bf 3},4) {\color{ForestGreen} \bf 3}} & \multicolumn{1}{c|}{(2,7,5) {\color{ForestGreen} \bf 4}} & \multicolumn{1}{c|}{(3,4,{\bf 6}) {\color{ForestGreen} \bf 5}} \\ \cline{3-5} 
			&                          & \multicolumn{3}{c}{$\mathcal{P}_0 \Leftrightarrow \text{\textit{IP}}~(y_0 = C)$}                
		\end{tabular}
		\caption{A finite three player game $\mathbf{G}$ with three subgames}
		\label{tab:play_3_strat_3}
	\end{table}
	\label{eg:play_3_strat_3_basic}

In this example, there are three subgames $\mathbf{G}^{(A)}$, $\mathbf{G}^{(B)}$ and $\mathbf{G}^{(C)}$. Best responses of each opponent against pure actions of the remaining players, as given by (\ref{eq:best_response_n}) are marked in bold. Table \ref{tab:play_3_strat_3} also presents the potentials (marked in green) of various strategy profiles within their respective subgames. Naturally, the profile with the highest potential is the unique pure Nash equilibrium within that subgame. Each cell depicts the payoffs for the three players for that strategy profile. When all the players (including the \textit{IP}) employ FP, the game converges to $(B,D,L)$ and the \textit{IP} obtains an expected payoff of $3$  $(=U_0^*(y))$. To increase its payoff, 
the {\em IP} utilizes Corollary \ref{cor:n_pl_mixed} to formulate three LPs followed by identifying the pure action $y_0^*$ whose convergence based mixed strategy $\mathbf{z}_0^*$ delivers the required payoff. 
This $y_0^*$ turns out to be $C$ and $\mathbf{z}_0^*$ turns out to be $(\frac{1}{9}, \frac{3}{9}, \frac{5}{9})^T$. The corresponding LP is given via (\ref{eq:eg_2_LP}), where vector $\mathbf{q} = (q_1, q_2, q_3)^T$ indicates the probabilities of pure actions $(A, B, C)$. The constraints in (\ref{eq:eg_2_LP_const_1}$-$\ref{eq:eg_2_LP_const_4}) restrict the opponents to $(U,R)$ while (\ref{eq:eg_2_LP_max}) maximizes the expected payoff for the \textit{IP}. The \textit{IP} obtains a payoff of $5.78(=\frac{52}{9})$ with this particular strategy, which is greater than $U_0^*(y)=3$. 
\begin{subequations}
	\label{eq:eg_2_LP}	
	\begin{align}
		\max_{\mathbf{q}} ~~ 6 q_1 + 7 q_2 + 5 q_3	\label{eq:eg_2_LP_max}\\
		\text{s.t.} 
		~~ - q_1 + 2 q_2 - q_3   \leq  0 \label{eq:eg_2_LP_const_1}\\
		~~   q_1 + 3 q_2 - 2 q_3 \leq  0 \label{eq:eg_2_LP_const_2} \\
		~~ - q_1 - 7 q_2 - 2 q_3 \leq  0 \label{eq:eg_2_LP_const_3} \\
		~~   q_1 - 6 q_2 - q_3   \leq  0 \label{eq:eg_2_LP_const_4} \\
		q_1 +  q_2 + q_3= 1\\ 
		q_1, q_2, q_3 \in [0,1].
	\end{align}
\end{subequations}

\end{example}
Convergence based mixed strategies are optimal in a class of strategies as stated below.
\begin{corollary}
	\label{cor:pure_action_highest_payoff}
	When the \textit{IP} is restricted to playing a single pure action repeatedly, (\ref{eq:n_pl_mixed}) in Theorem \ref{theorem:n_pl_mixed} delivers the highest payoff for the \textit{IP}. 
\end{corollary}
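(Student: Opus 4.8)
The plan is to show that the family of strategies in which the \textit{IP} repeats a single pure action is precisely the set of degenerate (point-mass) members of the convergence based mixed strategy family, so that the maximization in (\ref{eq:n_pl_mixed}) already ranges over them and therefore returns the best payoff attainable within this restricted class. Concretely, I would argue that a single repeated pure action is the special case $\mathbf{z}_0 = \mathbf{e}_{y_0}$ of (\ref{eq:payoff_IP_n}), where $\mathbf{e}_{y_0}$ is the probability vector placing unit mass on $y_0$.

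First I would fix a pure action $y_0 \in Y_0$ and invoke Theorem \ref{theorem:AFP_Nash}: if the \textit{IP} commits to $y_0$ at every stage, the opponents' FP dynamics are confined to the subgame $\mathbf{G}^{(y_0)}$ and converge in finite time to its unique pure Nash equilibrium $(y_j^*, y_{-j}^*)$ (which itself depends on $y_0$). Hence the only steady-state payoff the \textit{IP} can extract from repeating $y_0$ is the subgame equilibrium payoff $U_0(y_0, y_j^*, y_{-j}^*)$, and the best single repeated pure action attains $\max_{y_0} U_0(y_0, y_j^*, y_{-j}^*)$.

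Next I would check that this constant strategy is exactly the feasible corner $\mathbf{z}_0 = \mathbf{e}_{y_0}$ of the program (\ref{eq:LP}) attached to $\mathbf{G}^{(y_0)}$. Substituting $\mathbf{q} = \mathbf{e}_{y_0}$ collapses every constraint to $U_j(y_0, y'_j, y_{-j}^*) - U_j(y_0, y_j^*, y_{-j}^*) \le 0$, which is nothing but the Nash-equilibrium inequalities satisfied by $(y_j^*, y_{-j}^*)$ in $\mathbf{G}^{(y_0)}$; thus $\mathbf{e}_{y_0}$ is feasible, and evaluating (\ref{eq:payoff_IP_n}) there returns $U_0^{(y_0)}(\mathbf{e}_{y_0}, y_j^*, y_{-j}^*) = U_0(y_0, y_j^*, y_{-j}^*)$, matching the payoff from the previous step. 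Conversely, a point mass $\mathbf{e}_{k}$ with $k \neq y_0$ is feasible for the $\mathbf{G}^{(y_0)}$ program only if the equilibrium of $\mathbf{G}^{(k)}$ coincides with that of $\mathbf{G}^{(y_0)}$, in which case it contributes a payoff already accounted for by the program of $\mathbf{G}^{(k)}$. Consequently the point masses admissible across all subgames parameterize exactly the single-pure-action outcomes and no others.

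To finish, I would compare maxima: restricting $\mathbf{z}_0$ in (\ref{eq:n_pl_mixed}) to these point masses makes the inner maximization over each $\mathbf{G}^{(y_0)}$ equal to $U_0(y_0, y_j^*, y_{-j}^*)$, and the outer maximization over $y_0$ then returns $\max_{y_0} U_0(y_0, y_j^*, y_{-j}^*)$, i.e., exactly the best single-pure-action payoff identified in the first step. I expect the only delicate point to be the bookkeeping that makes ``restricting the \textit{IP} to one pure action'' coincide with the point-mass vertices of the feasible polytope of (\ref{eq:LP}), and that the outer maximization does not secretly credit a genuinely mixed strategy; once feasibility and the matching objective value are established, the optimality assertion is immediate from the definition of the maximization in (\ref{eq:n_pl_mixed}).
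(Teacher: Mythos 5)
Your proposal is correct and follows essentially the same route as the paper: identify each repeated pure action with the point-mass strategy $\mathbf{z}_0^k=1$ for $k=y_0$, invoke Theorem~\ref{theorem:AFP_Nash} to conclude the opponents settle at the unique pure Nash equilibrium of $\mathbf{G}^{(y_0)}$, and observe that the outer maximization in (\ref{eq:n_pl_mixed}) restricted to these point masses selects the subgame with the highest equilibrium payoff. Your additional check that the point masses are exactly the feasible corners of (\ref{eq:LP}) is a useful elaboration but not a different argument.
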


\begin{proof}
	Since every opponent adheres to FP, the game converges to the pure Nash Equilibrium of $\mathbf{G}^{(y_0)}$ in finite steps, when \textit{IP} plays $y_0$ repeatedly. Thus, the opponents have no incentive to deviate from $(y_0, y_j^*, y_{-j}^*)$ later. 
	In a finite game, restricting convergence based mixed strategy to probabilities of pure action such that $\mathbf{z}_0^k = 1$ when $k=y_0$ and $\mathbf{z}_0^k = 0$ otherwise and maximizing the payoffs using (\ref{eq:n_pl_mixed}) delivers the subgame with the highest Nash equilibrium payoff.
\end{proof}

For instance, in Example \ref{eg:play_3_strat_3_basic}, three pure actions of the \textit{IP} lead to three different pure Nash equilibria $(A, U, N)$, $(B, D, L)$ and $(C, U, R)$ when the \textit{IP} is restricted to playing a pure action repeatedly. Since the expected payoffs are equivalent to the Nash equilibrium payoffs $3$, $3$ and $5$ respectively, it is possible to identify $C$ as the pure action for the \textit{IP} that delivers the highest payoff under such restriction, as given by Corollary \ref{cor:pure_action_highest_payoff}. 

\paragraph{Computation of the strategy trajectory for the \textit{IP}}

There exist multiple strategy trajectories that obey the probabilities in $\mathbf{z}_0^*$ and yet are incapable of delivering the expected payoff promised by Theorem \ref{theorem:n_pl_mixed}. Further, $\mathbf{X}$ is an infinite sequence that lacks ease in implementation, as it can only be computed with a careful examination of the cost function and individual constraints, at every time instant. We present below an algorithm to compute the strategy trajectory.

To this end, we divide the infinite trajectory into two parts, the first being a static finite sequence played only once followed by another finite sequence that is played repeatedly and indefinitely. We denote the two sequences as $\mathbf{X}' = (y_0(1), y_0(2), \cdots, y_0(\tau'))$ and $\mathbf{X}^* = (y_0(1), y_0(2), \cdots, y_0(\tau^*))$ where $\tau',\tau^* \in \mathbb{N}$ and $y_0 \in Y_0^*$ such that $\mathbf{X}'$ is played once followed by repeated play of $\mathbf{X}^*$. It is worth noting that the \textit{IP}'s expected payoff converges to the payoff obtained during the sequence $\mathbf{X}^*$. We begin with a result that identifies the entries in $\mathbf{X}'$. 

\begin{lemma}	
	For the subgame $\mathbf{G}^{(y^*_0)}$ computed via Theorem \ref{theorem:n_pl_mixed}, the sequence $\mathbf{X}$ begins with the repetition of the pure action $y_0^*$. As a result, $y_0^*$ is contained in the tuple $\mathbf{X}'$. 
	\label{lemma:support_rule}
\end{lemma}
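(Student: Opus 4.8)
The plan is to argue that the entire purpose of the initial finite segment $\mathbf{X}'$ is to drive the opponents into the target pure Nash equilibrium $(y_0^*, y_j^*, y_{-j}^*)$ of the subgame $\mathbf{G}^{(y_0^*)}$ returned by Theorem \ref{theorem:n_pl_mixed}, so that the repeated block $\mathbf{X}^*$ can subsequently hold them there while the \textit{IP} mixes for a higher payoff. The central observation I would exploit is that \emph{while the \textit{IP} plays only the single action $y_0^*$, its empirical distribution $\hat{\tilde{\mathbf{z}}}_0(t)$ is the point mass on $y_0^*$ at every stage}. Consequently, in each opponent's best-response computation (\ref{eq:best_response_n}) the \textit{IP}'s contribution is frozen, and the opponents are precisely executing FP inside the static subgame $\mathbf{G}^{(y_0^*)}$.

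First I would invoke Theorem \ref{theorem:AFP_Nash}: since $\mathbf{G}$ is a finite non-degenerate ordinal potential game with respect to the \textit{IP}, FP by the opponents in $\mathbf{G}^{(y_0^*)}$ converges to the unique pure Nash equilibrium $(y_j^*, y_{-j}^*)$ in a finite number of time steps. Hence repeating $y_0^*$ for that finite number of stages is guaranteed to place all opponents at the target profile. This establishes the sufficiency direction: the constructed $\mathbf{X}$ may begin, and in our construction does begin, with a finite run of $y_0^*$, which immediately yields $y_0^* \in \mathbf{X}'$ since $\mathbf{X}'$ is the initial finite block.

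For the necessity flavor I would argue that $y_0^*$ is the action the \textit{IP} must lead with: it is the defining action of the subgame whose unique pure equilibrium is $(y_j^*, y_{-j}^*)$, and by our standing assumption the opponents abandon a profile only when doing so \emph{strictly} increases their expected payoff. Thus, were the \textit{IP} to lead with a different action, the opponents could lock into the equilibrium of some other subgame during the transient and, by this stickiness, have no incentive to migrate to $(y_j^*, y_{-j}^*)$ afterward. Presenting the opponents with $\mathbf{G}^{(y_0^*)}$ from the outset — i.e. beginning by repeating $y_0^*$ — is therefore the route whose convergence to the target is guaranteed.

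I expect the main obstacle to be the careful handling of the alternating-FP empirical-frequency dynamics. I must verify that repetition of $y_0^*$ genuinely collapses the opponents' play to FP in the \emph{static} subgame $\mathbf{G}^{(y_0^*)}$, so that Theorem \ref{theorem:AFP_Nash} applies verbatim despite the sequential, within-stage best-response ordering of Definition \ref{def:fict_play_n}, and that the opponents' strict-improvement stickiness genuinely precludes reaching the target from any other lead action. The remaining work — bounding the length of the $y_0^*$-prefix by the finite convergence time supplied by Theorem \ref{theorem:AFP_Nash} — is routine bookkeeping.
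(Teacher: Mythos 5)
Your argument is correct and follows essentially the same route as the paper's own proof: both reduce the claim to Theorem~\ref{theorem:AFP_Nash}, using the fact that repeated play of $y_0^*$ confines the opponents' FP dynamics to the subgame $\mathbf{G}^{(y_0^*)}$, whose unique pure Nash equilibrium is reached in finitely many steps, so the prefix $\mathbf{X}'$ consists of repetitions of $y_0^*$. Your explicit observation that the \textit{IP}'s empirical distribution is a point mass on $y_0^*$ throughout the prefix, and your sketch of why leading with a different action could strand the opponents at another subgame's equilibrium, are welcome elaborations of steps the paper leaves implicit.
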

\begin{proof}
	In order to maximize \textit{IP}'s expected payoff as per Theorem \ref{theorem:n_pl_mixed} and Corollary \ref{cor:n_pl_mixed}, it is desirable to restrict the opponents to the pure Nash equilibrium $(y_0^*, y_j^*, y_{-j}^*)$. Theorem \ref{theorem:AFP_Nash} indicates that in FP, 
	the strategies of opponents converge to this equilibrium in finite time steps, within the subgame $\mathbf{G}^{(y_0^*)}$. As a result, \textit{IP}'s pure action is restricted to $y_0^*$ until the opponents converge to $(y_0^*, y_j^*, y_{-j}^*)$, indicating that $y_0^* \in \mathbf{X}'$. 
\end{proof}

It has been established via the proof to Lemma \ref{lemma:support_rule} that repeated play of $y_0^*$ alone is adequate in converging all the opponents to the Nash equilibrium of $\mathbf{G}^{(y_0^*)}$. Consequently, the sequence $\mathbf{X}'$ is merely a repetition of $y_0^*$ for $\tau'$ time instants. In the first time instant, it is assumed that the opponents play a random pure action since FP 
needs at least one iteration to compute a best response. The value for $\tau'$ takes into account the number of maximum time instants required by the opponents to converge at $(y_0^*, y_j^*, y_{-j}^*)$. 
Further, FP 
of the opponents allows us to examine the expected payoffs of a single opponent while everyone else continues to adhere to their respective equilibria. 

We note that the support of $\mathbf{z}^*_0$ is the set of pure actions of the \textit{IP} with non-zero probabilities and denote it by $Y_0^*$ where $Y_0^* \subseteq Y_0$. The following result presents a connection between the coefficients (given by the matrix $\mathbf{A}$) of various pure actions in the constraint set of the form $\mathbf{A}\cdot \mathbf{q} \leq 0$. 

\begin{lemma}
	In the subgame $\mathbf{G}^{(y_0^*)}$, increasing the probabilities associated with pure actions other than $y_0^*$ increases the incentive for the opponents to deviate from $(y_0^*, y_j^*, y_{-j}^*)$. On the other hand, the probability for $y_0^*$ itself is inversely proportional to their incentive to deviate. 
	\label{lemma:opp_incentives}
\end{lemma}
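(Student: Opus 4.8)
The plan is to read the claim directly off the sign pattern of the constraint matrix $\mathbf{A}$ appearing in (\ref{eq:LP}). Recall that each row of $\mathbf{A}$ is indexed by a pair $(j,y'_j)$ consisting of an opponent $\mathcal{P}_j$ and a candidate deviation $y'_j \neq y_j^*$, each column by a pure action $k \in Y_0$ of the \textit{IP}, and the corresponding entry is
\[
c_k^{(j,y'_j)} = U_j(k, y'_j, y_{-j}^*) - U_j(k, y_j^*, y_{-j}^*).
\]
Writing $\mathbf{q}$ for $\mathbf{z}_0$, the row value $(\mathbf{A}\mathbf{q})_{(j,y'_j)} = \sum_{k \in Y_0} c_k^{(j,y'_j)} q_k$ is exactly the expected gain to $\mathcal{P}_j$ from switching from $y_j^*$ to $y'_j$ while the other opponents hold $y_{-j}^*$; I take this number to be $\mathcal{P}_j$'s incentive to deviate. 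The whole lemma then reduces to understanding how this linear functional responds to shifting probability mass among the columns.

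First I would establish the central fact: the column of $\mathbf{A}$ associated with $y_0^*$ is strictly negative in every row. By the standing assumption, $\mathbf{G}^{(y_0^*)}$ is a non-degenerate ordinal potential subgame with unique pure Nash equilibrium $(y_j^*, y_{-j}^*)$, so for each opponent $\mathcal{P}_j$ the action $y_j^*$ is its best response to $y_{-j}^*$ when the \textit{IP} plays $y_0^*$. Non-degeneracy with respect to the \textit{IP} guarantees that the distinct opponent profiles $(y_j^*, y_{-j}^*)$ and $(y'_j, y_{-j}^*)$ yield distinct payoffs for $\mathcal{P}_j$, so this best response is strict, giving $U_j(y_0^*, y_j^*, y_{-j}^*) > U_j(y_0^*, y'_j, y_{-j}^*)$ for all $y'_j \neq y_j^*$. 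Hence $c_{y_0^*}^{(j,y'_j)} < 0$ for every row, which is precisely the entrywise negativity of the $y_0^*$ column.

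Given this, both assertions follow by a monotonicity argument. Since each incentive $(\mathbf{A}\mathbf{q})_{(j,y'_j)}$ is linear in $\mathbf{q}$ with a strictly negative coefficient on $q_{y_0^*}$, raising $q_{y_0^*}$ strictly lowers every opponent's incentive to deviate, which is the claimed inverse relationship. For the complementary statement I would invoke the simplex constraint $\sum_k q_k = 1$: the total mass on actions other than $y_0^*$ equals $1 - q_{y_0^*}$, so increasing that aggregate is the same operation as decreasing $q_{y_0^*}$ and therefore raises each incentive. Substituting $q_{y_0^*} = 1 - \sum_{k \neq y_0^*} q_k$ makes this explicit, rewriting each constraint as $c_{y_0^*}^{(j,y'_j)} + \sum_{k \neq y_0^*}\big(c_k^{(j,y'_j)} - c_{y_0^*}^{(j,y'_j)}\big) q_k \le 0$ with a strictly negative constant term.

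The point I expect to need the most care is the phrasing of the first assertion. Individually, the coefficients $c_k^{(j,y'_j)}$ for $k \neq y_0^*$ need not be positive --- indeed in Example \ref{eg:play_3_strat_3_basic} the columns for $A$ and $B$ carry mixed signs --- so the claim cannot be read coefficient by coefficient. The honest content is the strict negativity of the single $y_0^*$ column together with the simplex identity $1 - q_{y_0^*} = \sum_{k \neq y_0^*} q_k$; I would therefore state the lemma's first half as a statement about the \emph{collective} probability mass diverted away from $y_0^*$ rather than about any single competing action, making clear that this is what drives the monotone increase in the opponents' incentive to deviate.
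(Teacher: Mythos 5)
Your proof is correct and follows essentially the same route as the paper's: both arguments read the lemma off the sign pattern of the constraint matrix $\mathbf{A}$ in (\ref{eq:LP}), with the key observation that the column indexed by $y_0^*$ is strictly negative because $(y_j^*,y_{-j}^*)$ is the (non-degenerate, hence strict) Nash equilibrium of $\mathbf{G}^{(y_0^*)}$. You are in fact somewhat more careful than the paper --- deriving the strict negativity explicitly from Definition~\ref{def:degeneracy_n} and flagging that the remaining columns may have mixed signs, so the first assertion must be read through the simplex identity rather than coefficient by coefficient --- but these are refinements of the same argument, not a different one.
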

\begin{proof}
	The proof follows from the structure of the game $\mathbf{G}$ where a strategy profile $(y'_0, y_j^*, y_{-j}^*), y'_0\in Y_0^*, y'_0\neq y^*_0$ is not necessarily a Nash equilibrium, making other strategy profiles in the subgame $\mathbf{G}^{(y'_0)}$ more attractive for the opponents. Mathematically, it follows from (\ref{eq:LP}) that an element $\mathbf{A}_{jk}$ is given by $(U_j(k,y'_j,y_{-j}^*) - U_j(k,y_j^*,y_{-j}^*))$. This expression is always negative for pure action $k=y_0^*$ while it can have positive entries for all other actions. Since the constraints are of the form $\mathbf{A}\cdot \mathbf{q} \leq 0$, all positive entries increase the incentive for opponents to deviate from $(y_0^*, y_j^*, y_{-j}^*)$ while the negative entries work in the opposite manner, proving Lemma \ref{lemma:opp_incentives}. 
\end{proof}

Once the players converge to $(y_0^*, y_j^*, y_{-j}^*)$, Theorem \ref{theorem:n_pl_mixed} and Corollary \ref{cor:n_pl_mixed} advocate the existence of a sequence $\mathbf{X}^*$ with the probability distribution given by $\mathbf{z}^*_0$ that restricts the opponents to $(y_0^*,y_j^*, y_{-j}^*)$, even when the \textit{IP} switches to actions other than $y_0^*$. It has already been shown, via Lemma \ref{lemma:opp_incentives}, that pure actions other than $y_0^*$ increase the incentive for opponents to deviate from $(y_0^*, y_j^*, y_{-j}^*)$. However, for any given opponent $\mathcal{P}_j$, the switch to a different action occurs only when the expected payoff from $U_j(k, y_j^*, y_{-j}^*)$ is strictly lesser than that from another pure action of $\mathcal{P}_j$ i.e., $U_j(k, y'_j, y_{-j}^*)$. 

Prior to determining $\mathbf{X}$, it is worth noting that any arbitrary sequence of actions in $\mathbf{X}^*$ with probabilities given by $\mathbf{z}_0^*$ need not deliver the desired payoff for the \textit{IP}. Consider, for instance, a sequence $\mathbf{X}^* = (A, B, B, B, C, C, C, C, C)$ in Example \ref{eg:play_3_strat_3_basic} with probabilities $(\frac{1}{9}, \frac{3}{9}, \frac{5}{9})^T$, derived via Theorem \ref{theorem:n_pl_mixed} and Corollary \ref{cor:n_pl_mixed}. Further, let $\mathbf{X}' = (C, C, C)$ to ensure convergence to the Nash equilibrium $(C, U, R)$. When $\mathbf{X}'$ is played first, followed by a repeated play of $\mathbf{X}^*$, \textit{IP} gets an expected payoff of $3.67$ as $\tau \rightarrow \infty$, as opposed to the desired $5.78$. We address this by commenting on the size of the sequences $\mathbf{X}^*$ and $\mathbf{X}'$, followed by a result that determines a candidate sequence $\mathbf{X}^*$. 

\begin{remark}
	\label{rem:tau*}
	The size of the tuple $\mathbf{X}^*$, given by $\tau^*$, is determined as the smallest integer that permits integer values to all pure actions $y_0 \in Y_0^*$ as they achieve their respective probabilities in $\mathbf{z}_0^*$. 	
 Further, 
	let $\tau_0$ be the maximum number of time instants required to converge opponents to the Nash equilibrium via FP.  
	The size of the sequence $\mathbf{X}'$, denoted by $\tau'$ is given by
	\begin{equation}
		\tau' = \max \{\tau_0, \tau^*\}.
		\label{eq:X'}
	\end{equation}
\end{remark}

For instance, when $\mathbf{z}_0^*$ is given by $(\frac{2}{3}, 0, \frac{1}{6}, \frac{1}{6})$, the smallest value for $\tau^*$ is $6$ which indicates the frequency of pure actions via $\tau^* \mathbf{z}_0^*$ is $(4, 0, 1, 1)$. Further, if the opponents need $3$ time instants to converge to the Nash equilibrium, (\ref{eq:X'}) provides a value of $6~(=\max \{3,6\})$ for $\tau'$. As a result, the sequence $\mathbf{X}'$ would be a repetition of the pure action $y_0^*$ for $6$ time instants. While Remark \ref{rem:tau*} 
provides a way to compute $\mathbf{X}'$ and $\tau^*$, they are not adequate in determining a sequence $\mathbf{X}^*$. 
We provide a result that achieves the desired strategy below.

\begin{theorem} 	
	\label{theorem:sequence}
	Let the cardinality of $Y_0^*$ be denoted by $m$ and the probability vector $\mathbf{z}_0^*$ be denoted by $\mathbf{q}^*$ where individual probability of a pure action $k_s (k_s\in Y_0^*)$ is given by $q^*_s$. (\ref{eq:X*}) and (\ref{eq:k1_restrict}) together illustrate a candidate sequence for $\mathbf{X}^*$ that achieves the twin purpose of (i) restricting the opponents to the Nash equilibrium $(y_0^*, y_j^*, y_{-j}^*)$ and (ii) generating expected payoff as given by Theorem \ref{theorem:n_pl_mixed} when $\tau \rightarrow \infty$. 
	\begin{equation}
		\begin{gathered}
			\mathbf{X}^* = (k_1(1), k_1(2), \cdots, k_1(\tau^*q^*_1),\\
			 k_2(\tau^*q^*_1+1), \cdots, k_2(\tau^*(q^*_1+q^*_2)), \cdots, \\
			 k_m(\tau^*\sum_{s=1}^{m-1}q^*_s+1), \cdots, k_m(\tau^*\sum_{s=1}^{m}q^*_s))
		\end{gathered}
		\label{eq:X*}
	\end{equation}
	\begin{equation}
		k_1 = y_0^* ~ \text{if}~ q^*_{y_0^*} \neq 0.
		\label{eq:k1_restrict}
	\end{equation}
\end{theorem}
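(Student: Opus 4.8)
The plan is to prove the two asserted properties in turn, isolating the frequency bookkeeping (property (ii), the payoff) as the routine part and the retention of the opponents at $(y_0^*,y_j^*,y_{-j}^*)$ (property (i)) as the crux. Over each period of length $\tau^*$ the block sequence (\ref{eq:X*}) plays $k_s$ exactly $\tau^* q^*_s$ times, so after $N$ completed periods the empirical frequency of the \textit{IP}'s actions equals $\mathbf{q}^*$ exactly and within a partial period deviates from it by $O(1/N)$. Thus, conditional on the opponents staying at $(y_j^*,y_{-j}^*)$, the \textit{IP}'s time-averaged payoff tends to $\sum_k q^*_k U_0(k,y_j^*,y_{-j}^*)=U_0^{(y_0^*)}(\mathbf{z}_0^*,y_j^*,y_{-j}^*)$, the value in (\ref{eq:n_pl_mixed}). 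So everything reduces to showing no opponent ever strictly prefers to deviate.

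For property (i), I would first use Lemma \ref{lemma:support_rule} together with Theorem \ref{theorem:AFP_Nash}: the prefix $\mathbf{X}'$, a repetition of $y_0^*$ for $\tau'=\max\{\tau_0,\tau^*\}$ steps, drives all opponents to $(y_j^*,y_{-j}^*)$ before the repeated phase begins, after which the remaining opponents' empirical distributions are point masses at $y_{-j}^*$ up to an $O(1/t)$ residual from their finite pre-convergence history. Fixing an opponent $\mathcal{P}_j$ and alternative action $y'_j$, write $\Delta_{jk}=U_j(k,y'_j,y_{-j}^*)-U_j(k,y_j^*,y_{-j}^*)$ as in (\ref{eq:LP}). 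At a time $t$ lying $\ell$ steps into the current period after $N$ full periods, the empirical distribution driving FP is $\hat q_k(t)=(N\tau^* q^*_k+c_k(\ell))/(N\tau^*+\ell)$ with $c_k(\ell)$ the count of $k$ so far in the period, and $\mathcal{P}_j$ deviates only if $\sum_k\hat q_k(t)\Delta_{jk}>0$; the sign of this is that of $N\tau^*(\sum_k q^*_k\Delta_{jk})+\sum_k c_k(\ell)\Delta_{jk}$ up to the $O(1)$ residual term.

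The decisive step is to control this expression constraint by constraint at the binding vertex $\mathbf{q}^*$. For slack constraints ($\sum_k q^*_k\Delta_{jk}<0$) the first term is negative of order $N$ and dominates everything, so no deviation occurs for large $N$. For binding constraints ($\sum_k q^*_k\Delta_{jk}=0$) the leading term vanishes and the sign is governed by the within-period partial sum $P_j(\ell):=\sum_k c_k(\ell)\Delta_{jk}$. Here the ordering rule (\ref{eq:k1_restrict}) enters: since $k_1=y_0^*$ and $\Delta_{j,y_0^*}<0$ by Lemma \ref{lemma:opp_incentives}, the piecewise-linear path $\ell\mapsto P_j(\ell)$ first descends over the $y_0^*$-block to $\tau^* q^*_{y_0^*}\Delta_{j,y_0^*}<0$ and must return to $P_j(\tau^*)=\tau^*\sum_k q^*_k\Delta_{jk}=0$ at the period boundary; its maximum over the period, attained at a block boundary, I would argue stays at or below zero precisely because the strongly stabilizing action $y_0^*$ has been front-loaded. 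This is exactly the property that the counterexample $(A,B,B,B,C,C,C,C,C)$ --- in which $y_0^*=C$ is placed last --- violates at its second step.

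The hardest part is making the last claim watertight under the single hypothesis $k_1=y_0^*$. A non-$y_0^*$ support action may itself carry a negative coefficient in some binding constraint, so the block-boundary values of $P_j$ need not be monotone and an adversarial order of the blocks $k_2,\dots,k_m$ could in principle lift the peak above zero; I expect to need either a refined placement of those blocks or a structural lemma showing that at an optimal binding vertex the total positive contribution of the non-$y_0^*$ actions is bounded by the reserve $\tau^* q^*_{y_0^*}|\Delta_{j,y_0^*}|$ accumulated over the leading block. A second, lower-order difficulty is the behaviour at the binding boundaries $P_j(\ell)=0$, where the $O(1/t)$ residuals from the other opponents' pre-convergence history and the FP tie-breaking convention must be shown to resolve strictly in favour of $y_j^*$; this should follow from the non-degeneracy assumption of Definition \ref{def:degeneracy_n}, which forbids exact payoff ties away from the mixing boundary.
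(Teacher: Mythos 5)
Your decomposition mirrors the paper's: part (ii) is handled identically (the block frequencies equal $\mathbf{q}^*$ at the end of every period, and the prefix $\mathbf{X}'$ washes out as $\tau\to\infty$), and for part (i) both arguments come down to showing that the LP constraints of (\ref{eq:LP}) hold at the running empirical frequency at every time instant. Where you diverge is in how that is shown. The paper invokes Lemma \ref{lemma:opp_incentives} to assert that the constraints ``reduce to'' the pointwise frequency conditions (\ref{eq:sequence_constraint}) --- keep $q_{y_0^*}(\tau)\ge q^*_{y_0^*}$ and $q_s(\tau)\le q^*_s$ for every other support action --- and then verifies those conditions by the explicit computations (\ref{eq:seq_const_y0*_proof})--(\ref{eq:seq_const_not_y0*_proof_2}). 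You instead track the constraint sums $\sum_k \hat q_k(t)\Delta_{jk}$ directly, dispose of slack constraints via the order-$N$ leading term, and for binding constraints reduce the question to whether the within-period partial sums $P_j(\ell)$ stay nonpositive.

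The difficulty you flag at the end is genuine: with only $k_1=y_0^*$ fixed and the remaining blocks in arbitrary order, a support action $k_s\neq y_0^*$ carrying $\Delta_{js}<\Delta_{j,y_0^*}<0$ in a binding constraint can produce a positive intermediate partial sum (block increments $-1,+3,-2$ sum to zero but peak at $+2$), and nothing in the stated hypotheses rules this out. But you should know that the paper does not close this hole either: writing $\delta_s=q^*_s-q_s(\tau)\ge 0$ and using $\sum_s q_s(\tau)=\sum_s q^*_s=1$, one has
\begin{equation*}
\sum_k q_k(\tau)\,\mathbf{A}_{jk}=\sum_k q^*_k\,\mathbf{A}_{jk}+\sum_{s\neq y_0^*}\delta_s\left(\mathbf{A}_{j,y_0^*}-\mathbf{A}_{js}\right),
\end{equation*}
so the frequency conditions (\ref{eq:sequence_constraint}) imply the LP constraints only when $\mathbf{A}_{js}\ge\mathbf{A}_{j,y_0^*}$ for every support action --- exactly the sign condition whose possible failure you identified. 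The paper's reduction silently assumes the favorable reading of Lemma \ref{lemma:opp_incentives} (nonnegative coefficients for all $k\neq y_0^*$). So your unresolved step is not a defect relative to the paper; your analysis is sharper at precisely the point where the paper's proof is weakest. As written, neither argument establishes (i) in full generality without an additional hypothesis (e.g., $\Delta_{jk}\ge 0$ for all $k\neq y_0^*$ in every binding constraint) or a refined ordering of the blocks $k_2,\dots,k_m$ chosen so that all partial sums of $\tau^*q^*_s\Delta_{j,k_s}$ remain nonpositive.
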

\begin{proof}
	In order to prove (i), it is adequate to show that the proposed sequence $\mathbf{X}^*$ adheres to the constraints in (\ref{eq:LP}) at every time instant. It follows from Lemma \ref{lemma:opp_incentives} that the opponents' incentive to deviate from $(y_0^*, y_j^*, y_{-j}^*)$ decreases with repeated play of $y_0^*$ while it increases with the play of any other pure action in $Y_0^*$. As a result, the constraints in (\ref{eq:LP}) reduce to (\ref{eq:sequence_constraint}) at any given time instant, where $q_s(\tau)$ indicates the probability of a pure action $k_s$ until time $\tau$. 
	\begin{subequations}
		\begin{align}
			q_s(\tau) \geq q^*_s &~\text{where}~ k_s = y_0^* \label{eq:seq_const_y0*}\\
			q_s(\tau) \leq q^*_s &~\forall~ k_s \in Y_0^*\setminus \{y_0^*\}. \label{eq:seq_const_not_y0*}
		\end{align}
		\label{eq:sequence_constraint}
	\end{subequations}
	When $q^*_{y_0^*}\neq 0$, the first strategy in $\mathbf{X}^*$ is $y_0^*$ and is repeated for $\tau^*q^*_{y_0^*}$ time instants. Since $\mathbf{X}^*$ is repeated indefinitely, the probability $q_{y_0^*}$ of the pure action $y_0^*$ after $t\in\{1, 2, \cdots, \tau^*q^*_{y_0^*}\}$ time instants in $p^{th}$ repetition is given by (\ref{eq:seq_const_y0*_proof}). It can  be shown that $q_{y_0^*}-q^*_{y_0^*}\geq 0$ thus proving (\ref{eq:seq_const_y0*}). 
	\begin{equation}
		q_{y_0^*}=\frac{\tau'+p \tau^* q^*_{y_0^*} + t}{\tau' + p \tau^* + t}.
		\label{eq:seq_const_y0*_proof}
	\end{equation}
	
	For every other action $k_s\in Y_0^*\setminus y_0^*$, the probability $q_s$ after $(\tau^*q_{y_0^*}^*+t)$ time instants in $p^{th}$ repetition is given by (\ref{eq:seq_const_not_y0*_proof}), where $t\in\{1, 2, \cdots, \tau^*q_s^*\}$. Here, it is assumed that $k_s$ immediately follows $y_0^*$ in $\mathbf{X}^*$. 
	\begin{equation}
		q_s=\frac{p \tau^* q^*_s + t}{\tau' + p \tau^* + \tau^*q_{y_0^*}^*+t}
		\label{eq:seq_const_not_y0*_proof}
	\end{equation}	
	The relation between $\tau^*$ and $t$ is stated via (\ref{eq:t_tau*_rel}) while, for an arbitrary non-negative constant $\epsilon$, (\ref{eq:tau'tau*_rel}) illustrates a relation between $\tau'$ and $\tau^*$ that follows from (\ref{eq:X'}). 
	\begin{subequations}
		\begin{align}
			t \leq \tau^*q^*_s \leq \tau^* \label{eq:t_tau*_rel} \\
			\tau' = \tau^* + \epsilon. \label{eq:tau'tau*_rel}
		\end{align}
	\label{eq:relations_tau_t}
	\end{subequations}
	With the help of (\ref{eq:seq_const_not_y0*_proof}) and (\ref{eq:tau'tau*_rel}), $(q_s-q^*_s)$ can be stated via (\ref{eq:seq_const_not_y0*_proof_2}). The negativity of the first term in the numerator of (\ref{eq:seq_const_not_y0*_proof_2}) follows from (\ref{eq:t_tau*_rel}) while the second term is always negative, thus proving (\ref{eq:seq_const_not_y0*}) when $q^*_{y_0^*}\leq 0$.
	\begin{equation}
		q_s-q^*_s = \frac{(t-\tau^*q_s^*)-q_s^*(\epsilon+p \tau^* + \tau^*q_{y_0^*}^*+t)}{\tau' + p \tau^* + \tau^*q_{y_0^*}^*+t}.
		\label{eq:seq_const_not_y0*_proof_2}
	\end{equation}
	When additional pure actions are played between $y_0^*$ and $k_s$, the value for the denominator in (\ref{eq:seq_const_not_y0*_proof}) increases while the numerator remains unchanged. This makes the expression in (\ref{eq:seq_const_not_y0*_proof_2}) more negative and thus does not affect the conclusion. Finally, when $q^*_{y_0^*}=0$, the sequence $\mathbf{X}'$ still contains $y_0^*$ but the sequence $\mathbf{X}^*$ does not use $y_0^*$. As a result, (\ref{eq:seq_const_y0*_proof}) is inconsequential while (\ref{eq:seq_const_not_y0*_proof_2}) is still negative. This proves (\ref{eq:seq_const_not_y0*}) and thus (i) for the sequence given by (\ref{eq:X*}) and (\ref{eq:k1_restrict}). 
	
	Since the opponents are restricted to $(y_0^*, y_j^*, y_{-j}^*)$, the \textit{IP} gets its payoff promised by Theorem \ref{theorem:n_pl_mixed} at the end of the sequence $\mathbf{X}^*$ in every iteration. This is also evident from the fact that the probability vector for all pure actions within the sequence $\mathbf{X}^*$ is equal to the solution of the Linear Programming problem $(\mathbf{q} = \mathbf{q}^*)$ at the end of the sequence, indicating the maximum payoff. As $\tau\rightarrow \infty$, the payoff from $\mathbf{X}'$ becomes negligible since it is played only once and the expected payoff converges to the solution of Theorem \ref{theorem:n_pl_mixed} and Corollary \ref{cor:n_pl_mixed}, proving (ii) here. 
\end{proof}

The construction of an optimal trajectory is summarized in Algorithm~\ref{algo:compute_X}.  The length of the sequences $\mathbf{X}^*$ and $\mathbf{X}'$ is given by Remark \ref{rem:tau*}. 

\begin{algorithm}[bhtp]	
	\caption{Determination of $\mathbf{X}$ for the \textit{IP}}
	\label{algo:compute_X}
	\DontPrintSemicolon 
	\KwIn{Game $\mathbf{G} = (n+1, {Y_i}, {U_i})$ and pure Nash equilibria of subgames $\mathbf{G}^{(y_0)} ~\forall~ y_0 \in Y_0$}
	\KwOut{$\mathbf{z}_0^*$, $y_0^*$, $\mathbf{X}'$ and $\mathbf{X}^*$}
	\textbf{Initialize} $r \gets |Y_0|$, $c \gets 0$, list $\mathbf{V}[r]$, matrix $\mathbf{W}[r,r]$  \;
	\For{$s \gets 1$ \textbf{to} $r$}
	{
		Solve Linear Programming problem for $y_0^s$ via (\ref{eq:LP})\;
		Use the solution to obtain $\mathbf{W}[s,:] \gets \mathbf{q}$ \;
		$\mathbf{V}[s] \gets \sum_{k \in Y_0} q_k ~U_0(k,y_j^*, y_{-j}^*)$ \;
	}
	$s' \gets \argmax \mathbf{V}$\;
	\Return $y_0^*$ as $y_0^{s'}$ and $\mathbf{z}_0^*$ as $\mathbf{W}[s',:]$\;	
	$\tau^* \gets $ smallest integer that can enforce $\mathbf{z}_0^*$\;
	$\tau_0 \gets $ maximum number of time instants to converge to $(y_0^*, y_j^*, y_{-j}^*)$ in the subgame $\mathbf{G}^{(y_0^*)}$\;
	$\tau' = \max \{\tau_0, \tau^*\}$\;
	\Return $\mathbf{X}'$ as $\{y_0^*(1), y_0^*(2), \cdots, y_0^*(\tau')\}$\;
	\If{$q^*_{y_0^*} \neq 0$}
	{
		\For{$l \gets 1$ \textbf{to} $\tau^*q^*_{y_0^*}$}
		{			
			$c \gets c+1$; $\mathbf{X}^*(c) \gets y_0^*$\;
		}
	}
	\For{$s \gets y_0^1$ \textbf{to} $y_0^r$ \textbf{except} $y_0^*$}
	{
		\For{$l \gets 1$ \textbf{to} $\tau^*q^*_s$}
		{			
			$c \gets c+1$; $\mathbf{X}^*(c) \gets s$\;
		}
	}
	\Return $\mathbf{X}^*$\;
\end{algorithm}

For instance, in Example \ref{eg:play_3_strat_3_basic},  where $y_0^*$ turns out to be $C$ and the mixed strategy $\mathbf{z}_0^*$ is given by $(\frac{1}{9},\frac{3}{9},\frac{5}{9})^T$, the smallest integer that can achieve $\mathbf{X}^*$ is $9$. Since convergence to Nash equilibrium from any strategy profile can be obtained in $\tau_0 = 4$ time instants, $\tau'$ is set to be $9(=\max\{4,9\})$, per Remark \ref{rem:tau*}.
It follows from Theorem \ref{theorem:n_pl_mixed} and $q^*_{y_0^*}\neq0$ that the sequence $\mathbf{X}^*$ begins with $y_0^*=C$ and repeats until its probability is reached within $\mathbf{X}^*$. This is followed by the pure actions $A$ and $B$ until their respective probabilities are achieved. 
It is worth noting that this sequence generates negative values for the expressions in (\ref{eq:eg_2_LP}) proving that the constraints are valid at every time instant within the first iteration of $\mathbf{X}^*$. Theorem \ref{theorem:sequence} further proves  that the constraints and probabilities remain valid perpetually. In summary, the sequence $\mathbf{X}$ obtained from $\mathbf{X}' = (C, C, C, C, C, C, C, C, C)$ and $\mathbf{X}^* = (C, C, C, C, C, B, B, B, A)$ is indeed an optimal trajectory and generates an expected payoff of $5.78$ as $\tau \rightarrow \infty$. 

\begin{example}
	A finite non-degenerate ordinal potential three player game with respect to the \textit{IP} ($\mathcal{P}_0$) is presented via Table \ref{tab:play_3_strat_3_advanced}. Each player has three pure actions and each matrix represents the game with respect to one of the pure actions of the \textit{IP}. The row player is $\mathcal{P}_1$ while the column player is $\mathcal{P}_2$. 
	\begin{table}[bhtp]
		\centering
		\begin{tabular}{ccccc}
			&                          & \multicolumn{3}{c}{$\mathcal{P}_2$}                                                              \\
			&                          & $L$                            & $N$                            & $R$                            \\ \cline{3-5} 
			\multirow{3}{*}{$\mathcal{P}_1$} & \multicolumn{1}{c|}{$U$} & \multicolumn{1}{c|}{(6,1,1) {\color{ForestGreen} \bf 1}} & \multicolumn{1}{c|}{(1,5,3) {\color{ForestGreen} \bf 5}} & \multicolumn{1}{c|}{(5,{\bf 9},{\bf 4}) {\color{ForestGreen} \bf 9}} \\ \cline{3-5} 
			& \multicolumn{1}{c|}{$M$} & \multicolumn{1}{c|}{(1,{\bf 3},5) {\color{ForestGreen} \bf 3}} & \multicolumn{1}{c|}{(1,4,6) {\color{ForestGreen} \bf 4}} & \multicolumn{1}{c|}{(1,8,{\bf 9}) {\color{ForestGreen} \bf 8}} \\ \cline{3-5} 
			& \multicolumn{1}{c|}{$D$} & \multicolumn{1}{c|}{(1,2,2) {\color{ForestGreen} \bf 2}} & \multicolumn{1}{c|}{(4,{\bf 6},7) {\color{ForestGreen} \bf 6}} & \multicolumn{1}{c|}{(1,7,{\bf 8}) {\color{ForestGreen} \bf 7}} \\ \cline{3-5} 
			&                          & \multicolumn{3}{c}{$\mathcal{P}_0 \Leftrightarrow \text{\textit{IP}}~(y_0 = A)$}                 \\
			&                          & $L$                            & $N$                            & $R$                            \\ \cline{3-5} 
			\multirow{3}{*}{$\mathcal{P}_1$} & \multicolumn{1}{c|}{$U$} & \multicolumn{1}{c|}{(5,{\bf 6},{\bf 7}) {\color{ForestGreen} \bf 9}} & \multicolumn{1}{c|}{(1,8,3) {\color{ForestGreen} \bf 5}} & \multicolumn{1}{c|}{(6,{\bf 7},5) {\color{ForestGreen} \bf 8}} \\ \cline{3-5} 
			& \multicolumn{1}{c|}{$M$} & \multicolumn{1}{c|}{(1,2,4) {\color{ForestGreen} \bf 1}} & \multicolumn{1}{c|}{(1,{\bf 9},6) {\color{ForestGreen} \bf 6}} & \multicolumn{1}{c|}{(1,5,{\bf 9}) {\color{ForestGreen} \bf 7}} \\ \cline{3-5} 
			& \multicolumn{1}{c|}{$D$} & \multicolumn{1}{c|}{(1,4,1) {\color{ForestGreen} \bf 2}} & \multicolumn{1}{c|}{(3,3,{\bf 8}) {\color{ForestGreen} \bf 3}} & \multicolumn{1}{c|}{(1,1,2) {\color{ForestGreen} \bf 4}} \\ \cline{3-5} 
			&                          & \multicolumn{3}{c}{$\mathcal{P}_0 \Leftrightarrow \text{\textit{IP}}~(y_0 = B)$}                 \\
			&                          & $L$                            & $N$                            & $R$                            \\ \cline{3-5} 
			\multirow{3}{*}{$\mathcal{P}_1$} & \multicolumn{1}{c|}{$U$} & \multicolumn{1}{c|}{(8,6,8) {\color{ForestGreen} \bf 5}} & \multicolumn{1}{c|}{(1,2,1) {\color{ForestGreen} \bf 1}} & \multicolumn{1}{c|}{(9,8,{\bf 9}) {\color{ForestGreen} \bf 6}} \\ \cline{3-5} 
			& \multicolumn{1}{c|}{$M$} & \multicolumn{1}{c|}{(1,4,{\bf 7}) {\color{ForestGreen} \bf 4}} & \multicolumn{1}{c|}{(1,3,5) {\color{ForestGreen} \bf 2}} & \multicolumn{1}{c|}{(1,1,6) {\color{ForestGreen} \bf 3}} \\ \cline{3-5} 
			& \multicolumn{1}{c|}{$D$} & \multicolumn{1}{c|}{(1,{\bf 7},3) {\color{ForestGreen} \bf 8}} & \multicolumn{1}{c|}{(2,{\bf 5},{\bf 4}) {\color{ForestGreen} \bf 9}} & \multicolumn{1}{c|}{(1,{\bf 9},2) {\color{ForestGreen} \bf 7}} \\ \cline{3-5} 
			&                          & \multicolumn{3}{c}{$\mathcal{P}_0 \Leftrightarrow \text{\textit{IP}}~(y_0 = C)$}                
		\end{tabular}
		\caption{A finite three player game $\mathbf{G}$ with three subgames}
		\label{tab:play_3_strat_3_advanced}
	\end{table}
	\label{eg:play_3_strat_3_advanced}

In order to identify $U_0^*(y)$ for the \textit{IP} in Example \ref{eg:play_3_strat_3_advanced}, we assume that all players employ FP in the sequence of their indices. Unlike Example \ref{eg:play_3_strat_3_basic}, the strategies do not converge to either a single or a mixed strategy profile. This is not surprising since the game $\mathbf{G}$ itself is not a non-degenerate ordinal potential game. 
However, the game $\mathbf{G}$ is still within the class of games we consider in this paper since the subgames adhere to Definition \ref{def:degeneracy_n}. While computation of $U_0^*(y)$ is not a part of Algorithm \ref{algo:compute_X}, the difference between the expected payoff from $\mathbf{X}$ and the value of $U_0^*(y)$ illustrates the effectiveness of the algorithm. In order to obtain a tight estimate, we consider the highest possible value for $U_0^*(y)$ after $100,000$ iterations of FP by all the players. The highest occurs when the first strategy profile is $(C, D, L)$ and game converges to the mixed strategies $(0.45, 0, 0.55)^T$, $(0.35, 0, 0.65)^T$ and $(0, 0.2, 0.8)^T$ generating an expected payoff of $3.89$ for the \textit{IP}. 

Algorithm \ref{algo:compute_X} begins with the initialization of the list $\mathbf{V}[r]$ and the matrix $\mathbf{W}[r,r]$ (Step 1), where $r$ is the cardinality of the set $Y_0$. This is followed by determination of $r$ Linear Programming problems and saving the solution vectors and optimized function values in $\mathbf{W}$ and $\mathbf{V}$ respectively (Steps 4 and 5). Step 7 identifies the highest entry in $\mathbf{V}$ while Step 8 returns the corresponding pure action as $y_0^*$ and the mixed strategy as $\mathbf{z}_0^*$. In Example \ref{eg:play_3_strat_3_advanced}, $y_0^*$ and $\mathbf{z}_0^*$ turn out to be $A$ and $(0, \frac{1}{7}, \frac{6}{7})$ while the corresponding highest payoff for the \textit{IP} is $8.57~(=\frac{60}{7})$. This is considerably higher than $U_0^*(y) = 3.89$ indicating the efficacy of Algorithm \ref{algo:compute_X}. 

Computation of $\mathbf{z}_0^*$ and $y_0^*$ provides the optimal convergence based mixed strategy. The remaining steps in Algorithm \ref{algo:compute_X} calculate a strategy trajectory and begin with identifying the sizes of $\tau^*$ and $\tau'$. In Example \ref{eg:play_3_strat_3_advanced}, the smallest integer that can enforce the probabilities $\frac{1}{7}$ and $\frac{6}{7}$ is $7$, while the FP requires at most $4$ steps to converge to $(A,U,R)$ when the \textit{IP} plays $A$ repeatedly. Step 9 thus computes $\tau^*$ as $7$, Step 10 determines $\tau_0$ as $4$ while Step 11 computes $\tau'$ to be $\max \{4,7\} = 7$. The sequence $\mathbf{X}'$ is obtained via Step 12 as $(A, A, A, A, A, A, A)$. 

Example \ref{eg:play_3_strat_3_advanced} illustrates a scenario where $q^*_{y_0^*} = q_A^* = 0$. Thus, the \textit{IP} need not repeat $A$ to restrict the opponents to the Nash equilibrium of $\mathbf{G}^{(A)}$, once the opponents use FP to converge. 
Steps 13-17 can be skipped for this example and the sequence $\mathbf{X}^*$ begins with $B$, the lowest indexed strategy from $Y_0^*$. Each of the remaining strategies (say $q_s^*$) are repeated $\tau^*q_s^*$ times until all the strategies are exhausted (Steps 18-22). Finally, the sequence $\mathbf{X}^* = (B, C, C, C, C, C, C)$ is returned by Algorithm \ref{algo:compute_X}. The payoff corresponding to $\mathbf{X}^*$ is 
$\frac{60}{7}$ which is indeed the expected payoff 
as $\tau \rightarrow \infty$. 

\end{example}

\section{Conclusions}
\label{sec:conclusions}
Fictitious play is a popular learning algorithm that converges to a Nash equilibrium in many classes of games. Here, we assumed that one player is {\em intelligent} that has access to the entire payoff matrix for the game and need not conform to fictitious play. We show such a player can achieve a better payoff than the one at the Nash Equilibrium. This result can be viewed both as a fragility of the fictitious play algorithm to a strategic intelligent player and an indication that players should not throw away additional information they may have, as suggested by classical fictitious play. Future work will consist of consideration of other learning algorithms and presence of multiple intelligent players.

\balance
\bibliographystyle{IEEEtran}


\end{document}